\newtheorem{theorem}{Theorem}
\newtheorem{prop}{Proposition}
\theoremstyle{definition}
\newtheorem{lemma}[theorem]{Lemma}
\newcommand*{\centerfloat}{%
  \parindent \z@
  \leftskip \z@ \@plus 1fil \@minus \marginparwidth
  \rightskip \leftskip
  \parfillskip \z@skip}
\journal{}
\begin{document}
\begin{frontmatter}

\title{Assignment-based Path Choice Estimation for Metro Systems Using Smart Card Data}


\author[label1,label4]{Baichuan Mo}
\author[label2]{Zhenliang Ma\corref{mycorrespondingauthor}}
\author[label3]{Haris N. Koutsopoulos}
\author[label4]{Jinhua Zhao}
\address[label1]{Department of Civil and Environmental Engineering, Massachusetts Institute of Technology, Cambridge, MA 02139}
\address[label2]{Department of Civil and Environmental Engineering, Monash University, }
\address[label3]{Department of Civil and Environmental Engineering, Northeastern University, Boston, MA 02115}
\address[label4]{Department of Urban Studies and Planning, Massachusetts Institute of Technology, Cambridge, MA 20139}
\cortext[mycorrespondingauthor]{Corresponding author}

\begin{abstract}

Urban rail services are the principal means of public transportation in many cities. To understand the crowding patterns and develop efficient operation strategies in the system, obtaining path choices is important. This paper proposed an assignment-based path choice estimation framework using automated fare collection (AFC) data. The framework captures the inherent correlation of crowding among stations, as well as the interaction between path choice and left behind. The path choice estimation is formulated as an optimization problem. The original problem is intractable because of a non-analytical constraint and a non-linear equation constraint. A solution procedure is proposed to decompose the original problem into three tractable sub-problems, which can be solved efficiently. The model is validated using both synthetic data and real-world AFC data in Hong Kong Mass Transit Railway (MTR) system. The synthetic data test validates the model's effectiveness in estimating path choice parameters, which can outperform the purely simulation-based optimization methods in both accuracy and efficiency. The test results using actual data show that the estimated path shares are more reasonable than survey-derived path shares and uniform path shares. Model robustness in terms of different initial values and different case study dates are also verified. 

\end{abstract}

\begin{keyword}
Path choice estimation; Smart card data; Transit assignment 
\end{keyword}

\end{frontmatter}


\section{Introduction}\label{intro}

With the increasing of city scale and population, metro systems are playing more and more important roles in urban transportation. Understanding the passenger flow distribution in the metro system is crucial for transit agencies to adjust operation strategies and better accommodate passengers. Simulation and transit assignment models are powerful instruments to obtain the passenger flows in the network, which can help to monitor and evaluate the system performance. Two important inputs are required for these kinds of models: the origin-destination (OD) demand matrix and passengers' path choice behavior. Thanks to the widely adopted automated fare collection (AFC) system, the station-to-station OD matrix in a metro network is usually known, while the path choice is hardly to be observed directly. Therefore, estimating the path choice becomes an important premise for system performance monitoring.

On-site survey is a conventional way to estimate path choices. However, the survey-based method has always been criticized due to the time-consuming and labor-intensive process. In addition, given the changes of metro rail network and operation timetables, the survey results may be out of date. To overcome these disadvantages, researchers have proposed path choice estimation methods using AFC data.

The AFC system are designed to conveniently charge passengers who use the metro system. When passengers tap in or tap out in the system with a smart card, the exact locations and time of the transactions will be recorded, which provide rich information for analyzing passenger behaviors. In the context of path choice estimation, the AFC data-based methods can be categorized into two groups: path-identification methods \citep{kusakabe2010estimation,zhou2012model,kumar2018robust} and parameters-inference methods \citep{sun2012rail,sun2015integrated,zhao2017estimation,xu2018learning}. The former studies aimed to identify the exact path chosen by a user. The path attributes (e.g. walking time, in-vehicle time) are used to evaluate how likely paths are chosen by passengers. While the later studies formulated probabilistic models to describe the random process of passengers' path choice behaviors. Bayesian inference is usually used to estimate the corresponding choice parameters, based on which path choice fractions can be derived. Despite using different methods, the key idea for those AFC data-based studies are similar. They all attempted to match the model-derived journey time with the observed journey time from AFC data. Since model-derived journey time is determined by choice parameters, the observed journey time thus provide a source to calibrate path choice behaviors. However, this type of methods may fail when the left behind (denied boarding) phenomenons are not well addressed. 

In a congested metro system, passengers are likely to be left behind due to the limited capacity of trains. Left behind will cause passengers' waiting time on the platform to increase, thus increase their total travel time. It may happen that the journey time for a long-distance route without left behind is close to that of a short-distance route with left behind, which makes the two routes indistinguishable using the purely journey time based methods \citep{zhu2017passenger}. Several studies have taken left behind into consideration explicitly or implicitly. For example, \cite{sun2015integrated} considered the delay caused by left behind as the part of travel time variability. This method is unable to distinguish the choice of routes with very similar journey time distribution. \cite{sun2012rail} and \cite{zhao2017estimation} assumed the left-behind probability for different stations are independent, and explicitly estimated the left behind probability before inferring the path choice fraction. These partially addressed the left behind problem. However, the independent left behind assumption neglects the important interactions in the network. In the real world, the left behind is caused by the interaction between supply and demand. A station with high entry demand may cause the adjacent stations to be congested because the remaining capacity for next station will become very limited. Therefore, the left behind probability for different stations are not independent. Moreover, it is not reasonable to consider path choice and left behind separately. These two components are interacted, where one could easily change another and vice-versa. Thus, the path choice estimation model needs to consider the correlation of left behinds among platforms, as well as the interactions between path choice and left behind. One of the alternatives is embedding the transit assignment model into path choice estimation with the information of network topology and train operation schedule. However, as known in the literature, the schedule-based dynamic transit assignment is a complicated problem with no direct closed form \citep{song2017statistical}. None of the aforementioned studies have incorporated it into path choice estimation, which, as a result, neglected the important network operation and interaction information.

This paper proposed a new path choice estimation framework, which incorporates the network topology and train operation information by embedding a transit assignment model. It can capture the left behinds correlation among platforms, and address the interactions between path choice and left behind. We formulate the path choice estimation as an optimization problem. The original problem is intractable due to non-analytical and non-linear equation constraints. We proposed a solution procedure to decompose the original problem into three tractable sub-problems, where each of them can be solved efficiently. The model is validated using data from Hong Kong Mass Transit Railway (MTR) system, which affirms the effectiveness of the proposed model in path choice estimation.

This remainder of this paper is organized as follows: Section \ref{lit} reviewed the related studies in the literature. Section \ref{method} describes the modeling framework, including network representation, problem definition, and solution procedures. The model was validated using both synthetic data and actual data from Hong Kong MTR in Section \ref{case_study}. Model robustness was also tested. Main ﬁndings and future research directions are summarized in Section \ref{discuss}.

\section{Literature Review}\label{lit}
Considerable literature exists on rail transit path choice estimation. On-site survey is a conventional way to estimate path choices. Preference data, such as stated preference (SP) and revealed preference (RP), are collected and analyzed by demand modeling methods. For example, \cite{lam2002transit} applied a path-size logit model to estimate route choice behaviors in Singapore with mixed SP and RP data. \cite{nazem2011demographic} adopted a discrete choice model to estimate passengers' route choice behaviors for different demographic groups. The household travel survey in Canada was used. \cite{eluru2012travel} used a mixed logit framework to study transit route choices in Montreal, Canada. A Google Map-based RP survey was deployed to collect the data. A methodological review on survey-based route choice estimation can be found in \cite{prato2009route}.

Recently, the emergence of smart card has shifted the research filed toward data-driven path choice estimation using historical transactions, rather than collecting route choice information with physical surveys. These studies can be categorized into two categories: path-identification methods and parameters-inference methods. In terms of path identification, \cite{kusakabe2010estimation} proposed a algorithm to identify the exact train that a passenger boarded using smart card data, which then gave the results of path choice. Based on the case study in Japan, the model were implicitly verified with the load weight of trains and GPS logs from a probe person survey. \cite{zhou2012model} proposed a path identification method based on the maximum
likelihood boarding plan, which assumes each individual will choose the path with the highest matching degree. The actual passenger data from the Beijing subway system were used as a case study. \cite{kumar2018robust} proposed a trip chaining method to infer the most likely trajectory of transit passengers using AFC and General Transit Feed
Speciﬁcation (GTFS) data. The method is applied to transit data from the Twin Cities and implicitly verified by the automatic passenger count data. The disadvantages of path-identification methods are as follows. First, these methods are usually applied at individual level, which may bring great computational challenges in large-scale and high-demand networks. Second, for the purpose of service quality evaluation, operators care more about the network-level path choice. Path-identification methods can only obtain the network-level path choice by aggregating the individual-level behaviors, which may induce estimation errors for some OD pairs with limited sample size. 

In contrast, parameters-inference methods can direct output the network-level path choice, which is more suitable for the system performance evaluation. Parameters-inference methods usually connect path shares with path attributes by constructing behavioral models (e.g. discrete choice model), and estimate the corresponding parameters in the constructed models. 
\cite{sun2012rail} proposed a probabilistic model for path choice estimation using AFC data. They first estimated platform elapsed time for transfer stations and through stations, then proposed a Gaussian mixture model to describe path choice fractions based on journey time distribution. The model is quantitatively validated with a simple synthetic data set and qualitatively validated based on Beijing metro systems. \cite{sun2015integrated} proposed an integrated Bayesian approach to estimate the network-level path choices. The path choice is described by a multinomial logit model with parameters to be estimated. The model is qualitatively validated based on the Singapore metro system.
\cite{zhao2017estimation} proposed a probabilistic model to estimated the route choice patterns using AFC data. They first estimated the number of trains waited by passengers, which is equivalent to the left behind rate. Then the path choice fractions were modeled and estimated based on a Gaussian mixture model.  \cite{xu2018learning} proposed a Bayesian inference approach to estimate the path choices parameters in logit model using AFC data. The Metropolis-Hasting sampling is used to calibrate the model parameters.  

As we mentioned in Section \ref{intro}, left-behind phenomenon is important to estimate the network-level path choice. However, few studies have well addressed this problem. The purely journey time based methods \citep{sun2012rail,sun2015integrated,xu2018learning} considered the waiting time caused by left behind as part of total journey time, which cannot distinguish long-distance paths without left behind and short-distance paths with left behind when they have very similar total journey time. \cite{zhao2017estimation} assumed left behinds are independent across stations, and considered the left behind and path choice separately, which neglected the interaction between supply and demand in the network.  Thus, it remains a challenge to develop a comprehensive path choice estimation framework which can capture the left behinds correlation among platforms, and address the interactions between path choice and left behind.

\section{Methodology}\label{method}
\vspace{3pt}
\subsection{Network Representation}\label{network_repre}
To capture the network interaction and operation information, dynamic transit assignment module should be incorporated into the model. For clarification, the term "assignment" in this paper represents the network loading process \citep{song2017statistical}, in which the passengers' route choices are known and treated as input. A typical way to represent transit network for assignment is using the Time-space (TS) hyper-network \citep{nguyen2001modeling,hamdouch2008schedule,hamdouch2011schedule}, where one station in the metro system are expanded into a series of nodes, representing the station at different time intervals. The length of the time interval $\tau$ is usually set as the minimal headway. For example, a station $a$ in the metro system will be expanded to nodes series ($a_1, a_2, ..., a_N$), where $a_1$ represents station $a$ at time 7:00-7:02; $a_2$ represents station $a$ at time interval 7:02-7:04, etc. Apparently, this fine-grained method may not be practical for the real-world application since the TS network can be extremely large. Consider a metro system with 100 stations and minimal headway of 2 minutes (e.g. the MTR network in our case study). To perform a 2-hours assignment, one station will be expanded to 60 TS nodes. The total number OD pairs in this TS network is approximately 36 millions, which brings huge computational challenges. However, the path choice calibration problem actually does not require such a fine-grained framework. It is commonly assumed that path choice behaviors are static for a specific time period (e.g. one hour). Therefore, a more aggregated network representation is needed.

Let us consider a studying time period $T$ and divide it into $n$ elementary time intervals of length $\tau$. Different to the typical TS network where $\tau$ is equal to the minimal headway, we set $\tau$ as a larger time intervals which includes several headways (e.g. $\tau = 15$ minutes). In the aggregated TS network, many headway-level behaviors, such as left behinds, cannot be explicitly modelled. But the trade-off is that we can obtain a more spares TS network which can be applied to large scale metro systems.

Consider two stations $i$ and $j$ in a metro system with different routes between them. The route set is denoted as $\mathscr{R}(i,j)$. Our purpose is to calculate the choice proportions of these routes for different time intervals. Now we expand the $i$ and $j$ into a sequence of TS nodes with time interval $\tau$, representing as $(i_1,...,i_m,...,i_N)$ and $(j_1,...,j_n,...,j_N)$, where $N = T/\tau$; $i_m$ represents station $i$ at time interval $m$; $j_n$ represents station $j$ at time interval $n$. Based on these notations, we define the following variables.

\begin{itemize}
\item \textbf{OD entry flow} (denoted as $q^{i_m,j}$): Number of people with origin $i$ and destination $j$ and entering station $i$ within time interval $m$. The OD entry flow is the demand input for transit assignment model. It can be obtained from the AFC data directly. The set of all $q^{i_m,j}$ is denoted as $\boldsymbol{q_e}$.

\item \textbf{OD entry-exit flow} (denoted as $q^{i_m,j_n}$): Number of people who enter station $i$ within time interval $m$ and exit at station $j$ within time interval $n$ ($m \leq n$). $q^{i_m,j_n}$ can be obtained from the transit assignment model, which contains the information of when passengers exit the system. Importantly, the ground truth OD entry-exit flow is available in the AFC data, which provides us opportunities to calibrate the path choice. In this study, OD entry-exit flow is similar to previous research which used journey time as the ground truth information.

\item \textbf{Path choice fraction (or Path share)} (denoted as $p_r^{i_m,j}$): The probability of path $r$ being chosen within time interval $m$, where $r \in \mathscr{R}(i,j)$. The subscript $m$ incorporates the dynamic (time-dependent) choice behavior. By definition, we have $ 0 \leq p_r^{i_m,j} \leq 1$ and $\sum_{r \mathscr{R}(i,j)}p_r^{i_m,j} = 1$. The set of all $p_r^{i_m,j}$ is denoted as $\boldsymbol{p}$.

\item \textbf{Path flow} (denoted as $q^{i_m,j_n}_r$): Number of people who entry at station $i$ within time interval $m$ and exit at station $j$ within time interval $n$ using path $r$.

\item \textbf{Delay rate} (denoted as $\mu^{i_m,j_n}_r$): The proportion of people who exit at time interval $n$ compared to the total number of people who entry at $i_m$ with destination $j$ using path $r$ (i.e. $\mu^{i_m,j_n}_r = q^{i_m,j_n}_r/\sum_n q^{i_m,j_n}_r$). It contains the information of how many people exiting the system at different time intervals, which only depends on the train schedule and left behind because we fixed the path $r$. Since the schedule is known, the delay rate can be seen as an indicator of left behind. For example, If people were left behind many times in path $r$, we might have $\mu^{i_m,j_{2}}_r > \mu^{i_m,j_1}_r$ because more proportion of people tend to leave the system later (at $j_2$), rather than earlier (at $j_1$). The set of all $p_r^{i_m,j}$ is denoted as $\boldsymbol{\mu}$.
\end{itemize}

By the definition of these variables, we have the following relationships. 
\begin{itemize}
\item OD entry-exit flow equals the path flow of corresponding OD pairs sum over all paths.
\setlength{\belowdisplayskip}{7pt}
\setlength{\abovedisplayskip}{7pt}
\begin{flalign}
q^{i_m,j_n} = \sum_{r \in \mathscr{R}(i,j)} q_r^{i_m,j_n},  \; \;\;\; \forall i_m, j_n
\label{eq_odentryexit}
\end{flalign}
This relationship is trivial and directly hold by definition. Since we can observe the true OD entry-exit flow from AFC data, this equation connects the observed information with estimated information.

\item Path flow equals the OD entry flow times the path share times the delay rate.
\setlength{\belowdisplayskip}{7pt}
\setlength{\abovedisplayskip}{7pt}
\begin{flalign}
q^{i_m,j_n}_r = q^{i_m,j} \cdot p_r^{i_m,j} \cdot \mu_r^{i_m,j_n}, \; \;\;\;\forall i_m, j_n,r \in \mathscr{R}(i,j)
\label{eq_assign}
\end{flalign}
This equation is the major procedure for the network loading, which assigns the OD demand (OD entry flow) to the path flow. 

\end{itemize}

To better illustrate the network representation, we show a simple example below. Consider two stations, $i$ and $j$, of a metro system, where $i$ is the origin and $j$ is the destination (see Figure \ref{fig_phy}). Assume there exists two different paths connecting this OD pair, i.e. $\mathscr{R}(i,j) = \{1,2\}$. The red arrows represent path 1, and blue arrows represent path 2. Let us consider the studying time period from 7:00 to 7:30 and set the time interval $\tau=15 \text{ min}$. Then the physical network can be extended to the TS network shown in Figure \ref{fig_TS}. For example, $i_1$ here represents the station $i$ at time 7:00-7:15. Assume the only OD entry flow is $q^{i_1,j} = 10$, and the path shares are $p_1^{i_1,j} = 0.3$ and $p_2^{i_1,j} = 0.7$. Then we know there are totally 10 people entry station $i$ during 7:00-7:15. 3 of them use path 1 and 7 of them use path 2. They all head to destination $j$ but currently we do not know when they will arrive the destination. Actually, given current information, if we run a transit assignment model, it will tell us when the passengers exit the system. For illustration purpose, suppose we have the additional information of delay rate. For the 3 people who use path 1, assume 2 out of 3 people tap out at station $j$ during 7:00-7:15 (i.e. $\mu_1^{i_1,j_1} = 2/3$) and 1 out of 3 people tap out at station $j$ during 7:15-7:30 (i.e. $\mu_1^{i_1,j_2} = 1/3$). Then we have: $q_1^{i_1,j_1}=q^{i_1,j} \cdot p_1^{i_1,j} \cdot \mu_1^{i_1,j_1} =2$ and  $q_1^{i_1,j_2} = q^{i_1,j} \cdot p_1^{i_1,j} \cdot \mu_1^{i_1,j_2} =1$. These equations are exactly the examples of Eq. (\ref{eq_assign}), which assign the OD entry flow ($q^{i_1,j}$) to the path flow ($q_1^{i_1,j_1}$ and $q_1^{i_1,j_2}$). Similarly, for the 7 people who use path 2, assume 4 out of 7 people tap out at station $j$ during 7:00-7:15 (i.e. $\mu_2^{i_1,j_1} = 4/7$) and 3 out of 7 people tap out at station $j$ during 7:15-7:30 (i.e. $\mu_2^{i_1,j_2} = 3/7$). Then we have $q_2^{i_1,j_1} = 4$ and $q_2^{i_1,j_2} = 3$. 

From the relationship between OD entry-exit flow and path flow (Eq. (\ref{eq_odentryexit})), we have $q^{i_1,j_1} = q_1^{i_1,j_1} + q_2^{i_1,j_1} = 6$, and $q^{i_1,j_2} = q_1^{i_1,j_2} + q_2^{i_1,j_2} = 4$. Also, if we sum the OD entry-exit flow over exit time intervals, we will get OD entry flow, i.e. $q^{i_1,j} = 10 = q^{i_1,j_1} + q^{i_1,j_2}$.

\begin{figure}[H]
\centering
\subfloat[Physical network]{\includegraphics[width=0.38\textwidth]{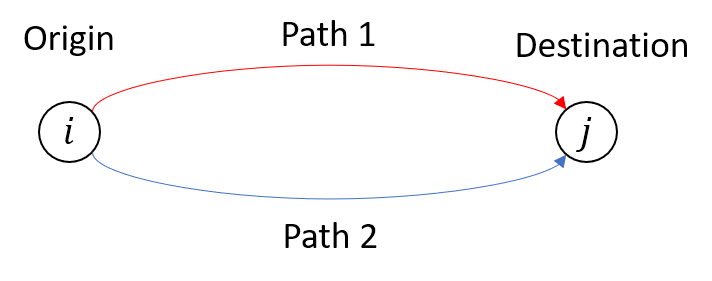}\label{fig_phy}}
\hfil
\subfloat[Time-space Hypernetwork]{\includegraphics[width=0.44\textwidth]{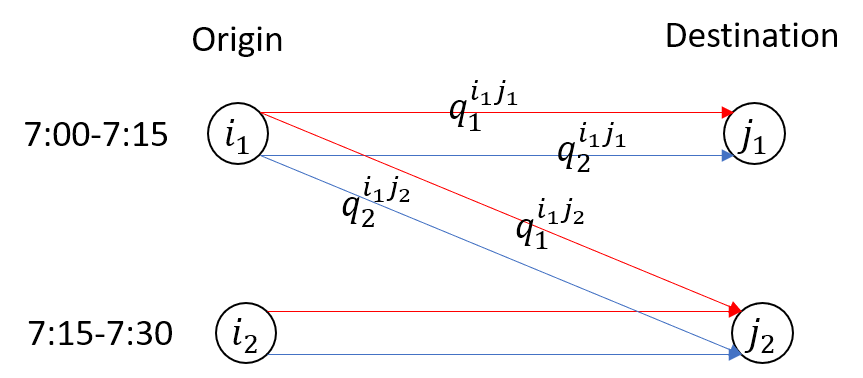}\label{fig_TS}}

\caption{Network Representation Example}
\end{figure}

\subsection{Problem Definition}
\vspace{3pt}
\subsubsection{Model assumptions}\label{assump}
Two major assumptions are made within the model and are presented below. First, we assume path shares can be formulated as a C-logit model \cite{cascetta1996modified}, which is an extension of multinomial logit (MNL) model. The formulation is shown below.
\begin{flalign}
p^{i_m,j}_r = \frac{\exp{(\beta_X \cdot X_{r,m}+\beta_{CF} \cdot CF_r)}}{\sum_{r' \in \mathscr{R}(i,j)} \exp{(\beta_X \cdot X_{r',m} + +\beta_{CF} \cdot CF_{r'})}} \vcentcolon= \frac{\exp{(\beta Y_{r,m})}}{\sum_{r' \in \mathscr{R}(i,j)} \exp{(\beta Y_{r',m})}} , 
\label{eq_MNL}
\end{flalign}
where $X_{r,m}$ are the attributes for path $r$ in time interval $m$, which include in-vehicle time, number of transfers, transfer walking time, etc.. $CF_r$ is the commonality factor of path $r$ which measures the degree of similarity of path $r$ with the other paths of the same OD. $\beta_X$ and $\beta_{CF}$ are the corresponding coefficients to be estimated. For simplicity, we define the $\beta$ and $Y_{r,m}$ as the combination of the two items in the utility function.  The $CF_r$ can be expressed as following.
\begin{flalign}
CF_r = \ln \sum_{r' \in \mathscr{R}(i,j)}(\frac{L_{r,r'}}{L_rL_{r'}})^\gamma, 
\label{eq_CF}
\end{flalign}
where $L_{r,r'}$ is the number of common stations of path $r$ and $r'$. $L_r$ and $L_{r'}$ are the number of stations for path $r$ and $r'$, respectively. $\gamma$ is a positive parameter which is assigned to 5 in this study. 

C-logit model can address the route overlapping problem with the correction term $CF$, which is widely used in modeling route path choices \citep{prato2009route}. Also, it remains the formulation of MNL, which is practical and easy to compute. 

Another assumption is related to the network loading criteria (NLC). We assume the following rules.
\begin{itemize}
\item When loading a train, passengers waiting at the platform are loaded based on a First-In-First-Board (FIFB) principle.
\item Every train has a strict physical capacity. When on-board passengers reach the capacity, the remained passengers will be left behind and wait in the platform for next available train. 
\end{itemize}

To make the network loading process (Eq. (\ref{eq_assign})) satisfies the NLC, $\mu_r^{i_m,j_n}$ has to satisfy some comprehensive constraints. This is because $\mu_r^{i_m,j_n}$ represents when and how many people exit the system, which is the reflection of NLC. Thus, the constraints of NLC should be formulated as the relationship among all $\mu_r^{i_m,j_n}$. However, it is difficult to formulate the constraints of $\mu_r^{i_m,j_n}$ analytically based on the aggregated network representation\footnote{It can only be done with the fine-grained TS network formulation, where $\tau$ is set as the minimum headway.}. Therefore, we temporally denote the constraints for $\mu_r^{i_m,j_n}$ as
\begin{flalign}
\mu_r^{i_m,j_n} \text{ satisfies the NLC}, \;\;\;\forall i_m, j_n,r \in \mathscr{R}(i,j)
\label{eq_mu_cons}
\end{flalign}
which is a non-analytical constraints and will be addressed in following sections.

Other minor assumptions are also made in this study.
\begin{itemize}
\item All transit services arrive on time. Time table is sufficiently reliable and can be considered as deterministic \citep{hamdouch2008schedule}. This assumption can be relaxed when the automated vehicle location (AVL) data is available, which can provide the ground-truth train arrival and departure time. 
\item The distribution of access walking time, egress walking time and transfer walking time are known. We can calculate when passengers arrive at the platform after tapping into the gates and when passengers get off the train before tapping out of the gates. 
\item The platform has infinite capacity to accommodate waiting passengers

\end{itemize}

\subsubsection{Formulation}
The purpose of this research is to estimate path choice using AFC data. Since we assume the path choice can be formulated as the C-logit model, the $\beta$ in C-logit model will be the decision variable. As we mentioned before, the OD entry-exit flow ($q^{i_m,j_n}$) is the output of transit assignment model, for which the ground truth value can also be observed from AFC data. So minimizing the difference between estimated and observed OD entry-exit flow can be the objective function. In case the prior information of path choice is available, we can add the difference between estimated $\beta$ and prior $\beta$ into the objective function as well. Therefore, we formulate the original problem as Eq. (\ref{eq:Orignal_prob}).

Eq. (\ref{eq:obj}) is the objective function, where $\tilde{q}^{i_{m} ,j_{n} }$ is the observed OD entry-exit flow; $\tilde{\beta}$ is the prior knowledge about the value of $\beta$, which can be the survey results from previous years. $w_1$ and $w_2$ are the corresponding weights. Note that in our case study section we assume no prior knowledge is known thus set $w_2 = 0$. Constraints \ref{eq:con1} and \ref{eq:con2} are the relationships described in Section \ref{network_repre}. Constraints \ref{eq:con3} and \ref{eq:con4} are the assumptions we made in Section \ref{assump}. Constraints \ref{eq:con5}, \ref{eq:con6} and \ref{eq:con7} are given by definition.

There are several constraints which make this problem hard to solve. First, constraints \ref{eq:con2} and \ref{eq:con3} are both nonlinear equality constraints because $\mu_r^{i_m,j_n}$ and $p_r^{i_m,j_n}$ are unknown. Second, constraint \ref{eq:con4} is non-analytical because we cannot formulate the NLC in terms of  $\mu_r^{i_m,j_n}$ analytically. So the original problem is intractable. The methods to deal with these constraints and approximately solve this original problem will be shown in following.   

\begin{mini!}|s|[2]                   
    {\beta,\boldsymbol{\mu}}                               
    {w_1\sum_{i_m,j_n}(q^{i_{m}  ,j_{n}}-\tilde{q}^{i_{m} ,j_{n} })^2 + w_2||\beta - \tilde{\beta}||^2 \label{eq:obj}}   
    {\label{eq:Orignal_prob}}             
    {}                                
    \addConstraint{{q}^{i_{m} ,j_{n}}}{=\sum_{r}{q}_r^{i_{m} ,j_{n}} }{\quad \quad \forall i_{m} ,j_{n} \label{eq:con1}}    
    \addConstraint{q^{i_{m} ,j_n}_r}{=  q^{i_{m} ,j} \cdot p^{i_m ,j}_r\cdot \mu^{i_{m} ,j_n}_r}{\quad \quad \forall i_{m} ,j, r\in \mathscr{R}(i,j) \label{eq:con2}}  
    \addConstraint{p_r^{i_m,j}}{=\frac{\exp{(\beta Y_{r,m})}}{\sum_{r' \in \mathscr{R}(i,j)} \exp{(\beta Y_{r',m})}}}{\quad \quad \forall i_{m} ,j, r\in \mathscr{R}(i,j) \label{eq:con3}}
    \addConstraint{\mu_r^{i_m,j_n} \text{ satisfies the NLC}}{}{\quad \quad \forall i_{m} ,j, r\in \mathscr{R}(i,j) \label{eq:con4}}
    \addConstraint{ \sum_{r\in \mathscr{R}(i,j)}  p^{i_{m} ,j}_r =  1}{}{\quad \quad \forall i_{m} ,j \label{eq:con5}}
    \addConstraint{ 0 \leq p^{i_{m} ,j}_r  \leq 1}{}{\quad \quad \forall i_{m} ,j, r\in \mathscr{R}(i,j) \label{eq:con6}}
    \addConstraint{q^{i_{m} ,j_n}_r \geq 0}{}{\quad \quad \forall i_{m} ,j, r\in \mathscr{R}(i,j) \label{eq:con7}}
\end{mini!}

\subsection{Problem decomposition}
Since we cannot formulate the constraints of $\mu^{i_{m} ,j_n}_r$ analytically, a natural method is to derive it from the results of a network loading process. Therefore, we decompose the original problem into two sub-problems as following.

\begin{itemize}
\item Sub-problem 1:
\begin{mini}|s|[2]                 
    {\beta}                            
    {w_1\sum_{i_m,j_n}(q^{i_{m}  ,j_{n}}-\tilde{q}^{i_{m} ,j_{n} })^2 + w_2||\beta - \tilde{\beta}||^2 \label{eq:eq1}}{}{}                   
    \addConstraint{\text{Eq. (\ref{eq:con1}) - (\ref{eq:con3})}}
    \addConstraint{\text{Eq. (\ref{eq:con5}) - (\ref{eq:con7})}}
\end{mini}
\item Sub-problem 2:
\setlength{\belowdisplayskip}{7pt}
\setlength{\abovedisplayskip}{7pt}
\begin{flalign}
\boldsymbol{\mu} = \text{Network Loading }(\beta, \boldsymbol{q_e}, \theta) 
\end{flalign}
\end{itemize}

Sub-problem 2 is the network loading model, which takes the route choice parameter $\beta$, OD entry demand $\boldsymbol{q_e}$ and model parameters $\theta$ as input, and outputs the delay rate $\boldsymbol{\mu}$. In this study, we use an event-based simulation model proposed by \cite{Ma2019Network} to perform the network loading. The model parameters $\theta$ include time table (or AVL data), transit network typology, access/egress/transfer time and train capacity, which are assumed to be known. Two events are considered in the network loading model, one is train arrival event, in which we offload passengers who will transfer or exit in the station. Another is train departure event, in which we update the passengers in the platform and load  passengers into the train. All events are processed sequentially based on their occurrence time so as to simulate the whole network loading process. More information about the simulation model can be found in \cite{Ma2019Network}. This simulation framework shares the same NLC and model assumptions as we described before. Therefore, the estimated $\boldsymbol{\mu}$ from the model will naturally satisfy the NLC constraints.

Sub-problem 1 is the variation of the original problem (Eq. (\ref{eq:Orignal_prob})) where the non-analytical constraint \ref{eq:con4} is removed. Besides, $\boldsymbol{\mu}$ is treated as known constants in sub-problem 1. Therefore, after problem decomposition, we can iteratively solve these two sub-problems to approximate the solution of the original problem. Looking at the properties of sub-problem 1, constraint \ref{eq:con2} is now linear since $\mu^{i_{m} ,j_n}_r$ is fixed. Also, we do not have the non-analytical constraint. But it is still intractable because of the highly non-linear constraint \ref{eq:con3}, which we called \emph{MNL constraints} hereafter. In the following sections, we will show how we approximately linearize the sub-problem 1 so as to transfer it to a simple quadratic programming problem.

\subsection{Approximate linearization for sub-problem 1}
Addressing the MNL constraints is difficult in the literature. \cite{davis2013assortment} and \cite{atasoy2015concept} showed when the MNL structure is in the objective function, utilities are constants but choice sets are unknown, the integer programming can be reformulated as a linear programming. However, for our problem the MNL structure is in the constraint part and $\beta$ in the utility function are unknown. Based on the authors' knowledge, there is no equivalent transformation from this MNL constraint to a tractable form. In this study, we propose two procedures to approximately linearize the sub-problem 1 with MNL constraints. The word "approximate" means the problem after transformation is not equivalent to the original problem, but solving the new problem can obtain the results close to the original one. 
\subsubsection{Construct approximate linear constraints (ALC)}
The MNL constraint shows the relationship between $\beta$ and $p^{i_m ,j}_r$. Since directly dealing with the non-linear constraints is difficult, we first replace the decision variables $\beta$ with $p^{i_m ,j}_r$ and remove constraint \ref{eq:con3}. Then the sub-problem 1 will become a simple quadratic programming given all constraints become linear. However, as the degree of freedom for $p^{i_m ,j}_r$ is much larger than $\beta$, directly replacing decision variables will cause severe problems of over-fitting. So, we need more constraints on $p^{i_m ,j}_r$ to narrow down the feasible space.

In this study, we propose a Monte-Carlo sampling method to construct a series of linear constraints for $p^{i_m ,j}_r$. The basic idea is that, for some OD pairs with same path sets, the corresponding path choice fractions may be same under MNL constraints. So we can construct linear constraints with the form $p^{i_m ,j}_r = p^{{i'_{m'}} ,j'}_{r^{'}}$ for some $i_m,j,r$ and $i'_{m'},j',r'$. 

A simple example is shown below to illustrate this property. Consider the OD pairs 1-5 and 2-5 in Figure \ref{fig_ALC}. There are two paths for both OD pairs. Path 1 transfers at station 4 and path 2 transfers at station 3. The path choice fractions are denoted as $p^{1,5}_1, p^{1,5}_2, p^{2,5}_1, p^{2,5}_2$, respectively. Note that we ignore the time index here for simplification. Assume there are four path attributes affecting people's choice: in-vehicle time, number of transfers and transfer walking time and commonality factor. Then we have the following proposition.

\begin{prop}\label{prop_mnl_equal}
Under MNL constraints, there are $p^{1,5}_1 = p^{2,5}_1$ and $p^{1,5}_2 = p^{2,5}_2$.
\end{prop}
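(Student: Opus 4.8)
The plan is to reduce the claim to an identity between \emph{differences} of systematic utilities and then read that identity off the structure of the two OD pairs in Figure~\ref{fig_ALC}. By the MNL/C-logit form in Eq.~(\ref{eq_MNL}), for a two-path OD pair one has $p^{i_m,j}_1 = \bigl(1+\exp(\beta(Y_{2,m}-Y_{1,m}))\bigr)^{-1}$ and $p^{i_m,j}_2 = 1-p^{i_m,j}_1$, so the shares depend only on the utility gap $\beta\bigl(Y_{1,m}-Y_{2,m}\bigr)$ between the two competing paths, not on the absolute utility levels. Hence it is enough to prove the single identity $\beta\bigl(Y^{1,5}_{1}-Y^{1,5}_{2}\bigr) = \beta\bigl(Y^{2,5}_{1}-Y^{2,5}_{2}\bigr)$: once that holds, $p^{1,5}_1/p^{1,5}_2 = p^{2,5}_1/p^{2,5}_2$, and since the two shares sum to one in each OD pair this forces $p^{1,5}_1=p^{2,5}_1$ and $p^{1,5}_2=p^{2,5}_2$.

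To establish the identity I would decompose every path into two consecutive legs: the leg from station $1$ to station $2$, and the leg from station $2$ to station $5$. By construction of the example, for $r\in\{1,2\}$ the second leg of path $r$ of OD $1$--$5$ coincides with path $r$ of OD $2$--$5$ (path $1$ transfers at station $4$ and path $2$ at station $3$ in both OD pairs), while the first leg from $1$ to $2$ is common to \emph{both} paths of OD $1$--$5$. Now check the four attributes in turn. The number of transfers and the transfer walking time of path $r$ are fixed by its transfer station ($4$ for $r=1$, $3$ for $r=2$) and are therefore equal for the corresponding paths of the two OD pairs, contributing nothing to either side. The in-vehicle time is additive along a path, so the in-vehicle time of path $r$ of OD $1$--$5$ is the in-vehicle time of the shared $1$--$2$ leg plus that of path $r$ of OD $2$--$5$; the shared term cancels in the within-OD difference, so this attribute contributes equally to both sides.

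The delicate step, and the one I expect to be the main obstacle, is the commonality factor $CF_r$ of Eq.~(\ref{eq_CF}), which is \emph{not} additive along a path: prepending the shared $1$--$2$ leg changes $L_r$, $L_{r'}$ and $L_{r,r'}$ all at once. I would handle it by writing, for a two-path OD pair, $CF_1=\ln\bigl[(1/L_1)^{\gamma}+(L_{1,2}/(L_1L_2))^{\gamma}\bigr]$ and $CF_2=\ln\bigl[(L_{1,2}/(L_1L_2))^{\gamma}+(1/L_2)^{\gamma}\bigr]$, and then using that in the configuration of Figure~\ref{fig_ALC} the two competing paths within each OD pair have the same number of stations, so $L_1=L_2$ and hence $CF_1=CF_2$; the commonality factor then cancels in the within-OD difference for \emph{both} OD pairs and again contributes equally (namely zero) to the two sides. (Without the equal-length property, one would instead verify the algebraic identity $CF^{1,5}_1-CF^{1,5}_2=CF^{2,5}_1-CF^{2,5}_2$ directly from the station counts.) Combining the four attribute contributions yields $Y^{1,5}_{1}-Y^{1,5}_{2}=Y^{2,5}_{1}-Y^{2,5}_{2}$; multiplying by $\beta$ and substituting into the MNL expression above completes the proof. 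The same reasoning shows more generally that whenever two OD pairs share their route set up to a common initial leg that is traversed by all routes, the corresponding path shares coincide, which is exactly the family of linear constraints $p^{i_m,j}_r=p^{i'_{m'},j'}_{r'}$ exploited by the sampling procedure.
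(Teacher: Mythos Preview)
Your argument is correct and follows the same overall strategy as the paper: reduce the claim to an equality of utility differences and verify it attribute by attribute. The paper organizes the comparison \emph{across} OD pairs for a fixed path index (showing $V_r^{1,5}-V_r^{2,5}=\beta_{tt}\,tt^{1,2}$ for both $r=1,2$), whereas you compare \emph{within} an OD pair across paths; the two are equivalent by rearranging the four utilities. The one substantive difference is the handling of the commonality factor. The paper asserts the cross-OD identity $CF_r^{1,5}=CF_r^{2,5}$ and omits the verification; you instead argue that $CF_1=CF_2$ within each OD pair whenever the two competing paths have the same station count $L_1=L_2$, so that $CF$ drops out of the within-OD utility gap on both sides. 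Your route is actually the safer one here: prepending the shared leg $1\!\to\!2$ shifts $L_r$, $L_{r'}$ and $L_{r,r'}$ simultaneously in Eq.~(\ref{eq_CF}) and does \emph{not} in general preserve $CF_r$ across OD pairs, but the within-OD cancellation you invoke follows immediately from the symmetry $L_1=L_2$ in the example network and is all that the proposition needs.
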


\begin{proof}
Denote the utility for path $r$ of OD pair $i$ and $j$ as $V_r^{i,j}$. Since the path 1 of OD 1-5 and the path 1 of OD 2-5 share the same transfer patterns, the number of transfers and transfer walking time for them are same. The only differences are the in-vehicle time and commonality factor. It tends out the commonality factors for these two paths are also same. The proof is easy to obtain by following Eq. (\ref{eq_CF}) and is ignored here. 
Therefore, the utility difference for path 1 of two different OD pairs only contains in-vehicle time term. Let the total in-vehicle time for path $r$ of OD pair $(i,j)$ be $tt_r^{i,j}$. Denote the in-vehicle time for link $(i,j)$ as $tt^{i,j}$. and the coefficients of in-vehicle time as $\beta_{tt}$. We have
\begin{align}
V_1^{1,5} -V_1^{2,5} = \beta_{tt} \cdot (tt_1^{1,5} - tt_1^{2,5}) = \beta_{tt} \cdot tt^{1,2}, \label{od1}
\end{align}
Similarly, for path 2 of  OD 1-5 and OD 2-5, we have
\begin{align}
V_2^{1,5} -V_2^{2,5} = \beta_{tt} \cdot (tt_2^{1,5} - tt_2^{2,5}) = \beta_{tt} \cdot tt^{1,2}, \label{od2}
\end{align}

According to the MNL constraint, we have
\begin{flalign}
p^{1,5}_1 &= \frac{1}{1+\exp(V_2^{1,5} - V_1^{1,5})} = \frac{1}{1+\exp((V_2^{2,5}+\beta_{tt} \cdot tt^{1,2}) - (V_1^{2,5}+\beta_{tt} \cdot tt^{1,2}))}. \\
&=\frac{1}{1+\exp(V_2^{2,5} - V_1^{2,5})} = p^{2,5}_1 \nonumber
\end{flalign}
Similarly, for path 2, we will have $p^{1,5}_2 = p^{2,5}_2$. 

This example network is extracted and simplified from the real-world Hong Kong MTR network. Therefore, this property does hold in the real-world for many OD pairs.

\begin{figure}[H]
\centering
\includegraphics[width=3.5 in]{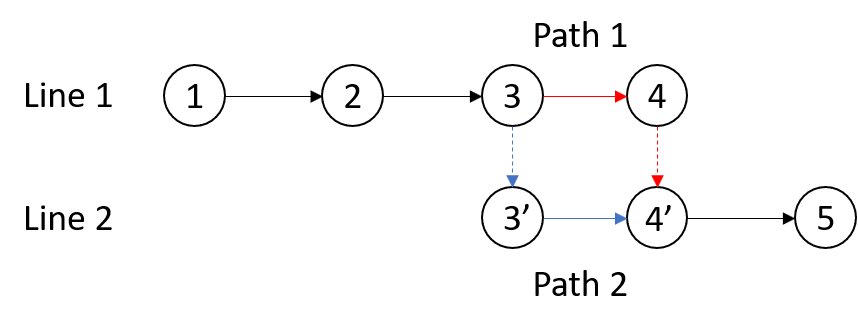}
\caption{Network Example for ALC}
\label{fig_ALC}
\end{figure}

\end{proof}

Besides equality constrains, there are also some inequality constraints hold under MNL constraint. For example, since all cost coefficients should be negative according our prior knowledge, if there is a path have smaller costs than other paths in the same OD pair, it should always be preferred regardless the magnitude of $\beta$. So we can construct linear constraints with the form $p^{i_m ,j}_r \geq  p^{{i_{m}} ,j}_{r^{'}}$ to capture this information. To automatically extract all these linear constraints in the system, we propose a Monte-Carlo sampling method. We first define a reasonable range for all $\beta$ (i.e. $\beta \in [L_\beta,U_\beta]$) based on the prior knowledge (e.g. survey results from previous years), where $L_\beta$ ($U_\beta$) is the lower (upper) bound of $\beta$. It is worth noting that the selection of $L_\beta$ and $U_\beta$ has very limited impact on the construction of ALC. The proof of Proposition \ref{prop_mnl_equal} tells us the equality constraints have nothing to do with the value of $\beta$. $L_\beta$ and $U_\beta$ only affects the construction of inequality constraints, and from our numerical tests the impact is very small. Generally, we only need to set the cost coefficients (e.g. in-vehicle time) to be negative.

The detailed ALC construction steps are shown in Algorithm \ref{alg_monte}. Denote the maximum number of sampling as $S$. The choose of $S$ is a trade off between computational efficiency and constraints accuracy. Larger $S$ can help avoid the coincidence of erroneously constructing the constraints.

\begin{algorithm} 
\caption{Monte-Carlo Based ALC Construction} \label{alg_monte}
\begin{algorithmic}[1]
\State Initialize $s=0$
\While{$s < S$}
\State $s=s+1$
\State Sample $\beta$ from the uniform distribution $U(a,b)$, denote them as $\beta^{(s)}$.
\State Calculate the path choice fraction for all paths based on $\beta^{(s)}$, denote them as ${p^{i_m ,j}_r}^{(s)}$
\EndWhile
\For{$\text{all } i_m,j,r \text{ in path sets}$} 
\For{$\text{all } {i_m}',j',r' \text{ in path sets}$}  
\If{${p^{i_m ,j}_r}^{(s)} = {p^{{i_{m}}' ,j'}_{r^{'}}}^{(s)}$ for all $s=1,...,S$}
\State Save ${p^{i_m ,j}_r} = {p^{{i_{m}}' ,j'}_{r^{'}}}$ as a linear constraint.
\EndIf 
\EndFor
\EndFor
\For{$\text{all } i_m,j \text{ in OD pairs sets}$} 
\For{$\text{all } r \in \mathscr{R}(i,j)$}  
\For{$\text{all } r' \in \mathscr{R}(i,j)$}
\If{${p^{i_m ,j}_r}^{(s)} \geq {p^{{i_{m}} ,j}_{r^{'}}}^{(s)}$ for all $s=1,...,S$}
\State Save ${p^{i_m ,j}_r} \geq {p^{{i_{m}} ,j}_{r^{'}}}$ as a linear constraint.
\EndIf
\EndFor
\EndFor
\EndFor
\State \Return All saved linear constraints
\end{algorithmic}
\end{algorithm}

We denote all the constructed linear constraints for ${p^{i_m ,j}_r} $ as 
\begin{flalign}
{p^{i_m ,j}_r} \text{ satisfies the ALC of MNL}, \;\;\;\forall i_m, j_n,r\in\mathscr{R}(i,j)
\label{eq_alc}
\end{flalign}
Thus, the sub-problem 1 can be reformulated as 

\begin{mini}|s|[2]                 
    {\boldsymbol{p}}
    {w_1\sum_{i_m,j_n}(q^{i_{m}  ,j_{n}}-\tilde{q}^{i_{m} ,j_{n} })^2 + w_2\sum_{i_m,j}\sum_{r \in \mathscr{R}(i,j)}(p^{i_m ,j}_r - \tilde{p}^{i_m ,j}_r)^2}
    {\label{eq:sub1a}}
    {}
    \addConstraint{\text{Eq. \ref{eq:con1} - \ref{eq:con2}}}
    \addConstraint{{p^{i_m ,j}_r} \text{ satisfies the ALC of MNL}}{}{\;\;\;\forall i_m, j_n,r \in \mathscr{R}(i,j)}
    \addConstraint{\text{Eq. \ref{eq:con5} - \ref{eq:con7}}}
\end{mini}
Note that we replace the decision variables from $\beta$ to $\boldsymbol{p}$. $\tilde{p}^{i_m ,j}_r$ is the prior knowledge about path share derived from $\tilde{\beta}$. Clearly, Eq. (\ref{eq:sub1a}) is a quadratic programming since all constraints are linear. It now can be solved very fast. However, despite Eq. (\ref{eq_alc}) add some constraints to $p^{i_m ,j}_r$, it is not equivalent to (less strong than) the original MNL constraints. This is why we called it "approximate" linear constraints (ALC). After adding the ALC, $\boldsymbol{p}$ may still have high degree of freedom. Based on the numerical test in the case study, after adding the ALC, the total degree of freedom can decrease around 40\%, which demonstrates a narrower feasible space. But we still need to go one step further to make all estimated path shares satisfy the MNL constraints. 

\subsubsection{MNL correction}
The estimated $\boldsymbol{p}$ from Eq. \ref{eq:sub1a} have two problems. The first is possible over-fitting due to high degree of freedom, which we have discussed before. The second is inestimable path shares due to little observed OD entry-exit flow. For example, if there is no passenger observed for OD pair $(i,j)$ in time interval $m$. $p^{i_m ,j}_r$ can take any values and does not affect the objective function, which makes it unable to be estimated. Both of these problems can be attributed to one reason: the estimated $p^{i_m ,j}_r$ violates the original MNL constraints (it only satisfies the ALC of MNL). 

To address this problem, we can use the estimated $p^{i_m ,j}_r$ from Eq. (\ref{eq:sub1a}) (called \emph{rough} path shares hereafter) to obtain a set of $\beta$, and then use the $\beta$ to generate new path shares. This procedure is called MNL correction. Path shares after MNL correction will naturally satisfy MNL constraints by definition. However, not all rough path shares are equally reliable. We have the following proposition for the reliability discussion.

\begin{prop}\label{prop_reliability}
The reliability of estimated $p^{i_m ,j}_r$ from Eq. (\ref{eq:sub1a}) is proportional to the corresponding OD entry flow $q^{i_m ,j}$.
\end{prop}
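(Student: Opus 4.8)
The plan is to first make the word ``reliability'' precise. I would identify it with the tightness with which the optimization problem in Eq.~(\ref{eq:sub1a}) determines a given path share $p^{i_m,j}_r$ --- concretely, the reciprocal of the range of values this variable can take without worsening the objective by more than a fixed tolerance, which, up to constants, is the same as the curvature of the objective along the $p^{i_m,j}_r$ coordinate, and also governs the sensitivity of the minimizer to perturbations of the observed data $\tilde q^{i_m,j_n}$. An estimate that the objective pins down over a narrow interval is ``reliable''; one that is essentially free (as happens when the relevant OD entry flow is tiny) is not.

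Next I would isolate how the path shares of a single OD--interval triple enter the objective. By Eqs.~(\ref{eq:con1})--(\ref{eq:con2}), the only objective terms containing $\{p^{i_m,j}_r\}_{r\in\mathscr{R}(i,j)}$ are the fitting residuals for the corresponding exit flows,
\[
q^{i_m,j_n} - \tilde q^{i_m,j_n} \;=\; q^{i_m,j}\sum_{r\in\mathscr{R}(i,j)} p^{i_m,j}_r\,\mu^{i_m,j_n}_r \;-\; \tilde q^{i_m,j_n},\qquad n\ge m,
\]
so $q^{i_m,j_n}$ is an affine function of the vector $\mathbf p=(p^{i_m,j}_r)_r$ whose gradient equals $q^{i_m,j}\,\boldsymbol{\mu}_n$, with $\boldsymbol{\mu}_n=(\mu^{i_m,j_n}_r)_r$. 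The factor $q^{i_m,j}$ multiplies every first derivative of the fitted quantity with respect to $\mathbf p$; this is the mechanism behind the claim.

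I would then convert this into a statement about uncertainty. Perturbing a reference share by $p^{i_m,j}_r \mapsto p^{i_m,j}_r + \varepsilon_r$ changes the residual above by $q^{i_m,j}\sum_r\varepsilon_r\mu^{i_m,j_n}_r$, so the contribution of this OD--interval block to the objective is a quadratic form in $\boldsymbol{\varepsilon}$ with Hessian $2w_1(q^{i_m,j})^2\sum_{n\ge m}\boldsymbol{\mu}_n\boldsymbol{\mu}_n^{\top}$. Equivalently, for a perturbation not to inflate the objective by more than a fixed amount one needs $\sum_r\varepsilon_r\mu^{i_m,j_n}_r = O\big(1/q^{i_m,j}\big)$ for each exit interval $n$, and the same $1/q^{i_m,j}$ scaling controls $\partial\hat{\mathbf p}/\partial\tilde q^{i_m,j_n}$ through the stationarity conditions of the quadratic program~(\ref{eq:sub1a}). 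Hence the admissible spread of $\hat p^{i_m,j}_r$ shrinks like $1/q^{i_m,j}$ and its reliability, the reciprocal of that spread, grows linearly in $q^{i_m,j}$; the degenerate case $q^{i_m,j}=0$ is recovered as a limit --- the Hessian block vanishes, $p^{i_m,j}_r$ becomes completely unidentified, and the reliability is zero.

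The main obstacle I anticipate is twofold. First, one must commit to a definition of ``reliability'' that is both defensible and yields a \emph{linear} (rather than, say, square-root) dependence on $q^{i_m,j}$: the clean linear law above comes from treating $\tilde q^{i_m,j_n}$ as fixed data being perturbed, whereas modelling it instead as a random count with demand-dependent variance would change the scaling of the estimator's standard deviation, so the statement should be read in this sensitivity/curvature sense. Second, the ALC equality constraints of Eq.~(\ref{eq_alc}) couple $p^{i_m,j}_r$ across different OD pairs, so the block-diagonal Hessian argument is exact only for path shares not tied by any ALC; for the remaining ones I would argue that the coupling can only import information from other, typically higher-demand blocks, which preserves rather than reverses the monotone ``more entry flow $\Rightarrow$ more reliable'' relationship.
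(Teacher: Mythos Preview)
Your argument is correct and lands on the same linear scaling as the paper, but you reach it by a different route. The paper's own proof is a short variance-propagation calculation: it rewrites $q^{i_m,j_n}=q^{i_m,j}\sum_r p_r^{i_m,j}\mu_r^{i_m,j_n}$, then \emph{assumes} homoscedasticity of the fitted entry--exit flows, $\mathrm{Var}[q^{i_m,j_n}]=\sigma^2$ (``analogue to the simple least square regression''), divides through by the constant $q^{i_m,j}$ to obtain $\mathrm{Var}\bigl[\sum_r p_r^{i_m,j}\mu_r^{i_m,j_n}\bigr]=\sigma^2/(q^{i_m,j})^2$, drops the constants $\mu_r^{i_m,j_n}$, and reads off reliability $\propto 1/\sqrt{\mathrm{Var}}\propto q^{i_m,j}$. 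You instead formalize reliability through the curvature/Hessian of the quadratic objective in Eq.~(\ref{eq:sub1a}) (equivalently, the sensitivity $\partial\hat{\mathbf p}/\partial\tilde q^{i_m,j_n}$), show the relevant block scales as $(q^{i_m,j})^2$, and deduce that the admissible spread of $\hat p_r^{i_m,j}$ scales as $1/q^{i_m,j}$.

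What each buys: your approach is more self-contained in that it does not need the extraneous homoscedasticity postulate, and you are explicit about two issues the paper glosses over --- the constraint coupling via the ALC, and the fact that a different stochastic model for $\tilde q^{i_m,j_n}$ (e.g.\ Poisson counts) would alter the exponent. The paper's derivation, on the other hand, has the advantage of directly motivating the weight choice $q^{i_m,j}\propto 1/\sqrt{\mathrm{Var}}$ used downstream in sub-problem~1(b), and it honestly acknowledges that it only bounds $\mathrm{Var}\bigl[\sum_r p_r^{i_m,j}\bigr]$ rather than the individual $\mathrm{Var}[p_r^{i_m,j}]$, a caveat that also applies to your Hessian block (which is generically rank-deficient in the number of exit intervals), so you might add the same disclaimer.
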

\begin{proof}
Intuitively, more observed passengers can provide more information for the path shares estimation. To validate it mathematically, we observe 
\begin{flalign}
q^{i_m,j_n} = q^{i_m,j} \sum_{r \in \mathscr{R}(i,j)}  p_r^{i_m,j} \cdot \mu_r^{i_m,j_n}, \; \;\;\;\forall i_m, j_n
\label{eq_comb}
\end{flalign}
by combining Eq. (\ref{eq_odentryexit}) and (\ref{eq_assign}). The reliability of $p^{i_m ,j}_r$ can be measured by the estimation variance. Thus,
\begin{flalign}
\text{Var}\left[\sum_{r \in \mathscr{R}(i,j)}  p_r^{i_m,j} \cdot \mu_r^{i_m,j_n}\right] = \frac{\text{Var}[q^{i_m,j_n}]}{(q^{i_m,j})^2}, \; \;\;\;\forall i_m, j_n
\label{eq_var_equal}
\end{flalign}

Since $q^{i_m,j_n}$ is the quadratic term in the objective function of Eq. (\ref{eq:sub1a}), analogue to the simple least square regression, we can assume $\text{Var}[q^{i_m,j_n}] = \sigma^2$ for all $i_m, j_n$ (i.e. homoscedasticity). This leads to  
\begin{flalign}
\text{Var}\left[\sum_{r \in \mathscr{R}(i,j)}  p_r^{i_m,j}\right] \propto \frac{\sigma^2}{(q^{i_m,j})^2}, \; \;\;\;\forall i_m, j_n
\label{eq_var_prop}
\end{flalign}
Note that we ignore $\mu_r^{i_m,j_n}$ since it is constant in sub-problem 1. Recall the problem that if there is no passengers observed for a specific OD pair ($q^{i_m,j} = 0$), the corresponding path share $ p_r^{i_m,j}$ is inestimable in Eq. (\ref{eq:sub1a}). Based on our derivation in Eq. (\ref{eq_var_prop}), this scenario results in the variance of estimated path shares equal to $+\infty$, which validates our derivation.

Notably, we only obtain the estimation variance of $\sum_{r \in \mathscr{R}(i,j)}  p_r^{i_m,j}$, rather than $p_r^{i_m,j}$. One reason is that $\text{Var}[p_r^{i_m,j}]$ is hard to derive. Another is that the following estimation of $\beta$ based on $\boldsymbol{p}$ can utilize $\text{Var}\left[\sum_{r \in \mathscr{R}(i,j)}  p_r^{i_m,j}\right]$ as a whole, that is, use $\frac{1}{\sqrt{\text{Var}\left[\sum_{r \in \mathscr{R}(i,j)}  p_r^{i_m,j}\right]}} \propto q^{i_m,j}$ as the weights, which eliminates the needs for deriving $\text{Var}[p_r^{i_m,j}]$.

\end{proof}

We formulate the MNL correction problem as following, which can be seen as a weighted fractional logit model \citep{papke1996econometric}.
\begin{flalign}
\label{eq_est_beta}
\max\limits_{\beta}  \sum_{i_m,j} q^{i_m,j} \sum_{r \in \mathscr{R}(i,j)}  p^{i_m ,j}_r \cdot \log \frac{ \exp(\beta Y_{r,m})}{\sum_{r' \in \mathscr{R}(i,j)}\exp(\beta Y_{r',m})}
\end{flalign}
Note that in Eq. (\ref{eq_est_beta}), $p^{i_m ,j}_r$ are constants. The objective function has the form of softmax function. Thus it is a convex optimization problem without constraints (like logistic regression), which can be solved efficiently. $q^{i_m,j}$ is the weight for corresponding path shares ($ \sum_{r \in \mathscr{R}(i,j)}  p^{i_m ,j}_r $), which reflects their reliability as we discussed in Proposition \ref{prop_reliability}. After we get $\beta$, the aforementioned two problems in results of Eq. (\ref{eq:sub1a}) will naturally disappear, because we can generate a new $\boldsymbol{p}$ which satisfies the MNL constraints exactly.

\subsection{Discussion of solution procedures}
So far, we have formulated three sub-problems to approximate the solution for the original problem. These sub-problems can be summarized in Eq. (\ref{eq_sub1a})-(\ref{eq_sub2}). In sub-problem 1(a), given the $\mu^{i_{m} ,j_n}_r$, we estimate the rough path shares, which is a quadratic programming problem. In sub-problem 1(b), given the rough path shares, we estimate the corresponding $\beta$, which can be seen as a weighted fractional logit model. In sub-problem 2, given the $\beta$, we load passengers to the network and return the $\mu^{i_{m} ,j_n}_r$ which satisfies the NLC constraints. 

\vspace{3pt}
\begin{itemize}
\item Sub-problem 1(a):
\begin{mini}|s|[2]                 
    {\boldsymbol{p}}
    {w_1\sum_{i_m,j_n}(q^{i_{m}  ,j_{n}}-\tilde{q}^{i_{m} ,j_{n} })^2 + w_2\sum_{i_m,j}\sum_{r \in \mathscr{R}(i,j)}(p^{i_m ,j}_r - \tilde{p}^{i_m ,j}_r)^2}
    {\label{eq_sub1a}}
    {}
    \addConstraint{\text{Eq. (\ref{eq:con1}) - (\ref{eq:con2})}}
    \addConstraint{{p^{i_m ,j}_r} \text{ satisfies the ALC of MNL}}{}{\;\;\;\forall i_m, j,r \in \mathscr{R}(i,j)}
    \addConstraint{\text{Eq. (\ref{eq:con5}) - (\ref{eq:con7})}}
\end{mini}

\item Sub-problem 1(b):
\setlength{\belowdisplayskip}{7pt}
\setlength{\abovedisplayskip}{7pt}
\begin{flalign}
\label{eq_sub1b}
\max\limits_{\beta}  \sum_{i_m,j} q^{i_m,j} \sum_{r \in \mathscr{R}(i,j)}  p^{i_m ,j}_r \cdot \log \frac{ \exp(\beta Y_{r,m})}{\sum_{r' \in \mathscr{R}(i,j)}\exp(\beta Y_{r',m})}
\end{flalign}

\item Sub-problem 2:
\setlength{\belowdisplayskip}{7pt}
\setlength{\abovedisplayskip}{7pt}
\begin{flalign}
\label{eq_sub2}
\boldsymbol{\mu} = \text{Network Loading }(\beta, \boldsymbol{q_e}, \theta) 
\end{flalign}
\end{itemize}

We expect to solve these three sub-problems iteratively and approximate the solution for the original problem. This is equivalent to find a fixed point of the following problem.
\begin{flalign}
\label{eq_fixed_point}
\beta = \textsc{SP1b} \circ \textsc{SP1a} \circ \textsc{SP2}(\beta)
\end{flalign}
where \textsc{SP2} is the solution function of Sub-problem 2, i.e. $\boldsymbol{\mu} = \textsc{SP2}(\beta)$; \textsc{SP1a} is the solution function of Sub-problem 1(a), i.e. $\boldsymbol{p} = \textsc{SP1a}(\boldsymbol{\mu})$;  \textsc{SP1b} is the solution function of Sub-problem 1(b), i.e. $\beta = \textsc{SP1b}(\boldsymbol{p})$; "$ \circ $" is the sign of function composition, i.e., $f \circ g (x) =  f(g(x))$. We are curious about the existence and uniqueness of the solution in Eq. (\ref{eq_fixed_point}), and its relationship with the original problem in Eq. (\ref{eq:Orignal_prob}). Before discussing the main proposition, a lemma is introduced. 

\begin{lemma}\label{lemma_feasibility}
If a path share set $\boldsymbol{p}$ satisfies MNL constraints in terms of some choice parameters $\beta^*$, then $\beta^*$ is the solution of sub-problem 1(b) with respect to $\boldsymbol{p}$. Mathematically, if $p^{i_m ,j}_r = \frac{\exp{(\beta^* Y_{r,m})}}{\sum_{r' \in \mathscr{R}(i,j)} \exp{(\beta^* Y_{r',m})}}$ for all $p^{i_m ,j}_r \in \boldsymbol{p}$, then $\beta^* = \textsc{SP1b}(\boldsymbol{p})$.
\end{lemma}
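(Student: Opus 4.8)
The plan is to recognize the objective of sub-problem~1(b), Eq.~(\ref{eq_sub1b}), as a nonnegatively weighted sum of cross-entropy (log-likelihood) terms, one per OD pair, and to show that each such term is maximized exactly when the logit-implied shares equal the given shares $p^{i_m,j}_r$. Write $g^{i_m,j}_r(\beta) = \frac{\exp(\beta Y_{r,m})}{\sum_{r'\in\mathscr{R}(i,j)}\exp(\beta Y_{r',m})}$, so that $\big(g^{i_m,j}_r(\beta)\big)_{r\in\mathscr{R}(i,j)}$ is a probability distribution over $\mathscr{R}(i,j)$ for every $\beta$, and the objective of Eq.~(\ref{eq_sub1b}) is $F(\beta)=\sum_{i_m,j} q^{i_m,j}\sum_{r\in\mathscr{R}(i,j)} p^{i_m,j}_r \log g^{i_m,j}_r(\beta)$.

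First I would apply Gibbs' inequality (equivalently, nonnegativity of the Kullback--Leibler divergence between the distribution $(p^{i_m,j}_r)_r$ and the distribution $(g^{i_m,j}_r(\beta))_r$) to each OD pair separately: $\sum_{r} p^{i_m,j}_r \log g^{i_m,j}_r(\beta)\le \sum_{r} p^{i_m,j}_r \log p^{i_m,j}_r$, with equality if and only if $g^{i_m,j}_r(\beta)=p^{i_m,j}_r$ for all $r$. Multiplying by the nonnegative weights $q^{i_m,j}\ge 0$ and summing over OD pairs gives $F(\beta)\le F^\star$ for every $\beta$, where $F^\star=\sum_{i_m,j} q^{i_m,j}\sum_{r} p^{i_m,j}_r \log p^{i_m,j}_r$. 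Next I would invoke the hypothesis: since $\boldsymbol{p}$ satisfies the MNL constraints with respect to $\beta^*$, we have $g^{i_m,j}_r(\beta^*)=p^{i_m,j}_r$ for every $p^{i_m,j}_r\in\boldsymbol{p}$, so the equality case holds in every OD-pair term at once, i.e.\ $F(\beta^*)=F^\star$. Combined with $F(\beta)\le F^\star$ for all $\beta$, this shows $\beta^*$ is a global maximizer of the unconstrained problem Eq.~(\ref{eq_sub1b}), hence $\beta^*=\textsc{SP1b}(\boldsymbol{p})$. Equivalently, one can check the first-order condition directly: $\nabla F(\beta)=\sum_{i_m,j} q^{i_m,j}\big(\sum_r p^{i_m,j}_r Y_{r,m}-\sum_r g^{i_m,j}_r(\beta) Y_{r,m}\big)$ (using $\sum_r p^{i_m,j}_r=1$), which vanishes at $\beta^*$; and since $F$ is concave (it is a nonnegative combination of multinomial-logit log-likelihoods), a stationary point is a global maximizer.

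I do not expect a real obstacle here: the crux is simply recognizing the objective as a weighted cross-entropy and applying Gibbs' inequality termwise. The only delicate point is whether $\textsc{SP1b}$ is single-valued, i.e.\ whether the maximizer is unique. I would dispose of this by remarking that strict concavity of $F$, hence uniqueness, holds exactly under the usual full-rank condition on the design $\{Y_{r,m}\}$ for logit estimation; when that fails, the statement should be read as ``$\beta^*$ is \emph{a} solution of sub-problem~1(b),'' which is all that the subsequent fixed-point analysis requires.
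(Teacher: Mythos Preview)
Your proof is correct and follows essentially the same approach as the paper: both recognize the objective of Eq.~(\ref{eq_sub1b}) as a weighted cross-entropy and observe that it is maximized when the logit-implied shares $g^{i_m,j}_r(\beta)$ coincide with the given $p^{i_m,j}_r$, which occurs at $\beta^*$ by hypothesis. Your version is more explicit---you name Gibbs' inequality and also supply the first-order/concavity alternative, and you correctly flag the uniqueness caveat---whereas the paper simply remarks that the objective ``has the form of entropy function'' and that the maximum is reached at $p=h$.
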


\begin{proof}
Denote $\frac{\exp{(\beta Y_{r,m})}}{\sum_{r' \in \mathscr{R}(i,j)} \exp{(\beta Y_{r',m})}}$ as $h^{i_m ,j}_r$. Then Eq. \ref{eq_sub1b} can be rewritten as
\begin{flalign}
\label{eq_lemma1_proof}
\max\limits_{\beta}  \sum_{i_m,j} q^{i_m,j} \sum_{r \in \mathscr{R}(i,j)}  p^{i_m ,j}_r \cdot \log h^{i_m ,j}_r,
\end{flalign}
which has the form of entropy function. The maximum can be reached when $p^{i_m ,j}_r = h^{i_m ,j}_r, \; \forall i_m, j,r \in \mathscr{R}(i,j) $. Since we already have $p^{i_m ,j}_r$ satisfies MNL constraints in terms of $\beta^*$, feeding $\beta^*$ into Eq. (\ref{eq_lemma1_proof}) gives the desired condition for $p^{i_m ,j}_r$ and $ h^{i_m ,j}_r$. Thus $\beta^*$ is the optimal solution of sub-problem 1(b).
\end{proof}



The discussion of existence of fixed point in Eq. (\ref{eq_fixed_point}) is followed. Here, we assume the objective function equal to 0 is reachable by some $\beta$ in the original problem (Eq. (\ref{eq:Orignal_prob})), which is an ideal situation where all passengers' behaviors are assumed to be perfectly described by the C-logit model. Extending the proof to more complicated scenarios need future works. 
\begin{prop}\label{prop_fixed_point}
If the objective function equal to 0 is reachable by some $\beta$ in the original problem (Eq. (\ref{eq:Orignal_prob})), then the optimal solution $\beta^*$ for the original problem is one of the fixed point for Eq. (\ref{eq_fixed_point}). 
\end{prop}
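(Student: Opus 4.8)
The plan is to show that the optimal $\beta^*$ of the original problem, when run through the composed map $\textsc{SP1b} \circ \textsc{SP1a} \circ \textsc{SP2}$, returns $\beta^*$ itself. The idea is to track what each sub-problem does to the data associated with $\beta^*$ and argue that the objective value $0$ is preserved at each stage, so nothing can change.

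First I would unpack the hypothesis: since the objective of Eq.~(\ref{eq:Orignal_prob}) attains $0$ at $\beta^*$, there exist $\boldsymbol{\mu}^*$ and $\boldsymbol{p}^*$ feasible for the original problem such that the model-derived OD entry-exit flows exactly match the observed $\tilde q^{i_m,j_n}$ (and, if $w_2 \neq 0$, also $\beta^* = \tilde\beta$, though the relevant case is $w_2 = 0$). Here $\boldsymbol{p}^*$ is the C-logit path-share vector determined by $\beta^*$ via constraint~(\ref{eq:con3}), and $\boldsymbol{\mu}^*$ is a delay-rate vector satisfying the NLC. Next I would invoke Sub-problem 2: since the network loading simulation shares the same NLC and assumptions, feeding $\beta^*$ into $\textsc{SP2}$ produces delay rates $\boldsymbol{\mu}' = \textsc{SP2}(\beta^*)$ that are consistent with $\beta^*$ and the NLC. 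I would argue $\boldsymbol{\mu}'$ reproduces the same OD entry-exit flows as $\boldsymbol{\mu}^*$ when combined with $\boldsymbol{p}^*$ and $\boldsymbol{q_e}$ — essentially because both describe the passengers generated by $\beta^*$ under the true schedule, so they give the matching $\tilde q$; this is the step where I need the deterministic/consistency properties of the simulation model from \cite{Ma2019Network}.

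Then I would turn to Sub-problem 1(a): with $\boldsymbol{\mu}'$ fixed, $\boldsymbol{p}^*$ is feasible for Eq.~(\ref{eq_sub1a}) — it satisfies Eqs.~(\ref{eq:con1})–(\ref{eq:con2}) with the matching flows, it satisfies the ALC of MNL because it is an exact C-logit vector (the ALC constraints are, by construction in Algorithm~\ref{alg_monte}, satisfied by every C-logit vector), and it satisfies Eqs.~(\ref{eq:con5})–(\ref{eq:con7}) by definition. Since plugging in $\boldsymbol{p}^*$ drives the objective of Eq.~(\ref{eq_sub1a}) to $0$ (both terms vanish: the flow-matching term by the argument above, and the prior term when $w_2=0$ or when $\tilde p$ comes from $\beta^*$), $\boldsymbol{p}^*$ is an optimal solution, so we may take $\boldsymbol{p}' = \textsc{SP1a}(\boldsymbol{\mu}') = \boldsymbol{p}^*$. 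Finally, $\boldsymbol{p}^* = \boldsymbol{p}'$ satisfies the MNL constraints in terms of $\beta^*$, so Lemma~\ref{lemma_feasibility} gives $\textsc{SP1b}(\boldsymbol{p}') = \beta^*$. Composing, $\textsc{SP1b} \circ \textsc{SP1a} \circ \textsc{SP2}(\beta^*) = \beta^*$, which is exactly the fixed-point condition Eq.~(\ref{eq_fixed_point}).

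The main obstacle I anticipate is the selection issue in Sub-problem 1(a): optimal solutions of a quadratic program need not be unique, so $\textsc{SP1a}(\boldsymbol{\mu}')$ is a priori a set (or an arbitrary selection), and I need $\boldsymbol{p}^*$ to be among the values it can return — hence the statement is naturally phrased as "$\beta^*$ is \emph{one of} the fixed points." I would make this precise by treating $\textsc{SP1a}$, $\textsc{SP1b}$, $\textsc{SP2}$ as set-valued maps (or by noting we are free to pick the optimizer), so that the claim becomes $\beta^* \in \textsc{SP1b}(\textsc{SP1a}(\textsc{SP2}(\beta^*)))$. A secondary subtlety is justifying that the simulation-generated $\boldsymbol{\mu}'$ really yields the same observed flows as the (possibly different) $\boldsymbol{\mu}^*$ used to certify optimality; if the NLC pins down the delay rates uniquely given $\beta^*$ and $\boldsymbol{q_e}$ this is immediate, and otherwise I would argue that any NLC-consistent $\boldsymbol{\mu}$ paired with the true $\boldsymbol{p}^*$ reproduces $\tilde q$ because that is precisely what "objective $=0$ is reachable" asserts about the physical system.
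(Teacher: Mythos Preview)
Your proposal is correct and follows essentially the same route as the paper: define $\boldsymbol{\mu}^* = \textsc{SP2}(\beta^*)$, argue that the C-logit vector $\boldsymbol{p}^*$ is feasible and attains objective $0$ in sub-problem~1(a) (using that the ALC feasible set contains the MNL feasible set, so the optimal value of~1(a) is $\leq 0$ and hence $=0$), and then apply Lemma~\ref{lemma_feasibility}. The only difference is that the paper sets $\boldsymbol{\mu}^* := \textsc{SP2}(\beta^*)$ from the outset rather than introducing a separate $\boldsymbol{\mu}'$, which sidesteps (or, more honestly, silently absorbs) the very subtlety you flag about whether the simulation output coincides with the NLC-satisfying $\boldsymbol{\mu}$ that certifies optimality; your explicit discussion of that point and of the non-uniqueness of $\textsc{SP1a}$ is in fact more careful than the paper's own argument.
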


\begin{proof}
Denote $\boldsymbol{\mu}^* = \textsc{SP2}(\beta^*)$. Define $\boldsymbol{p}^*$ such that ${p^{i_m ,j}_r}^* = \frac{\exp{(\beta^* Y_{r,m})}}{\sum_{r' \in \mathscr{R}(i,j)} \exp{(\beta^* Y_{r',m})}}$ for all ${p^{i_m ,j}_r}^* \in \boldsymbol{p}^*$. We claim $\boldsymbol{p}^* = \textsc{SP1a}(\boldsymbol{\mu}^*)$. The proof is shown below.

Clearly, $\boldsymbol{\mu}^*$ satisfies the NLC and $\boldsymbol{p}^*$ satisfies the MNL constraints. So ($\beta^*, \boldsymbol{\mu}^*, \boldsymbol{p}^*$) is the optimal solution for the original problem. Comparing sub-problem 1(a) and the original problem, if we feed in $\boldsymbol{\mu}^*$ into sub-problem 1(a), the optimal objective function of sub-problem 1(a) should be less than or equal to that of original problem because $\boldsymbol{p}$ has larger feasible space in sub-problem 1(a). However, given the optimal objective function of original problem is 0, we know the optimal objective function of sub-problem 1(a) is 0 as well. And since the objective function for these two problems are same, we have $\boldsymbol{p}^*$ is the optimal solution for sub-problem 1(a) as well.

By definition, $\boldsymbol{p}^*$ satisfies the MNL constraints in terms of $\beta^*$. According to Lemma \ref{lemma_feasibility}, we have $\beta^* = \textsc{SP1b}(\boldsymbol{p^*})$. This leads to $\beta^*  = \textsc{SP1b} \circ \textsc{SP1a} \circ \textsc{SP2}(\beta^*)$. 
\end{proof}
The proof of uniqueness requires the application of Banach fixed-point theorem \citep{luan2015theorem}. However, since $\textsc{SP1b} \circ \textsc{SP1a} \circ \textsc{SP2}$ has no analytical expression, it is hard to prove the uniqueness religiously. A corollary of Banach fixed-point theorem provides a way to obtain the fixed point: Start with an arbitrary $\beta_0$ and define a sequence $\{\beta_n\}$ by $\beta_n = \textsc{SP1b} \circ \textsc{SP1a} \circ \textsc{SP2}(\beta_{n-1})$ for $n \geq 1$. If there exists a $\beta^*$ such that $\lim_{n\to\infty} \beta_n = \beta^*$. Then $\beta^*$ is the fixed point. Following this corollary, we can test the convergence of $\{\beta_n\}$ numerically based on the synthetic data described in Section \ref{case_study}. The results are shown in Figure \ref{fig_beta_convergence}. All $\beta$ has shown convergence trends despite of some slight fluctuation in the tail. The fluctuation may come from the randomness in the network loading model. 

\begin{figure}[H]
\centering
\subfloat[In-vehicle time]{\includegraphics[width=0.25\textwidth]{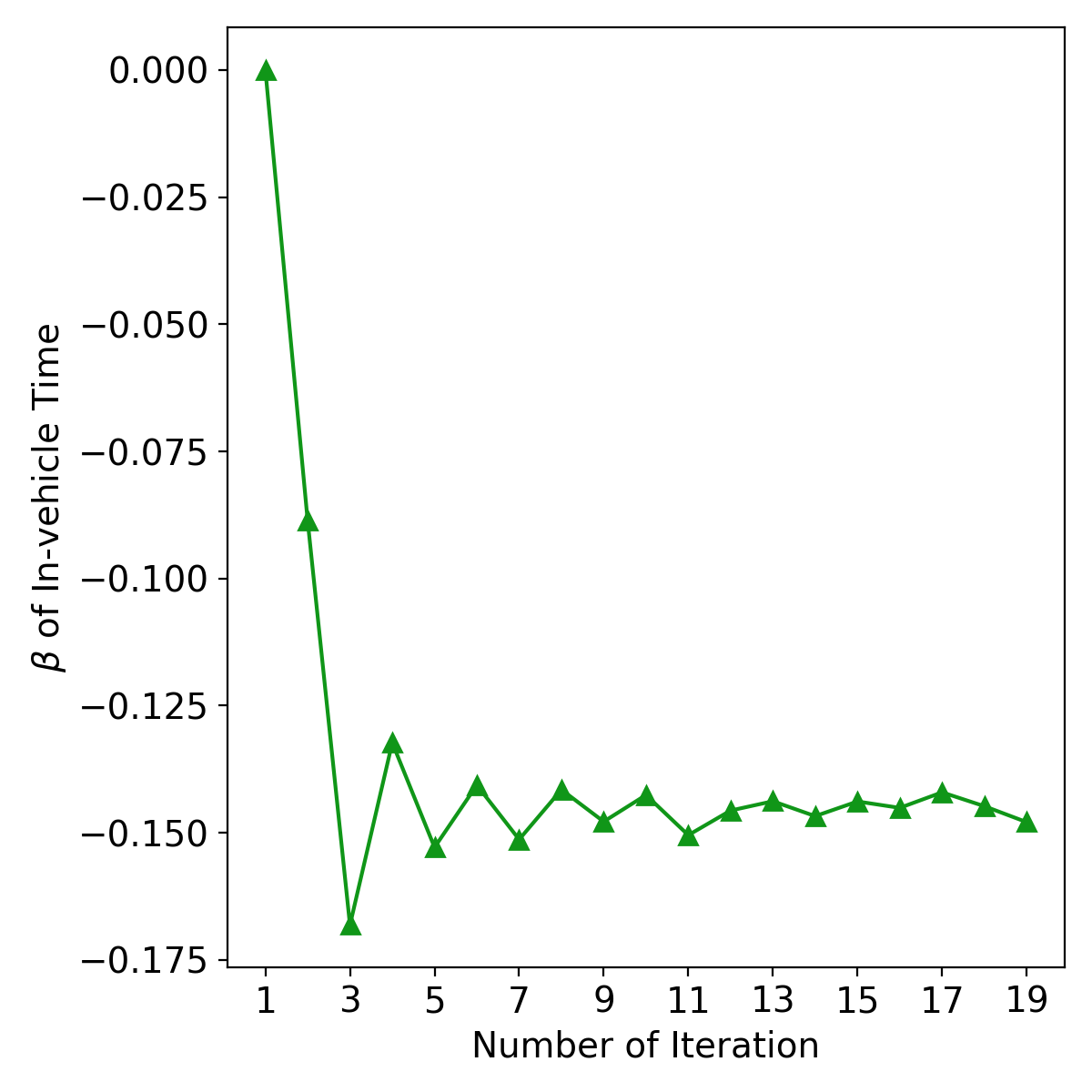}\label{fig_beta_1}}
\hfil
\subfloat[Num of transfer]{\includegraphics[width=0.25\textwidth]{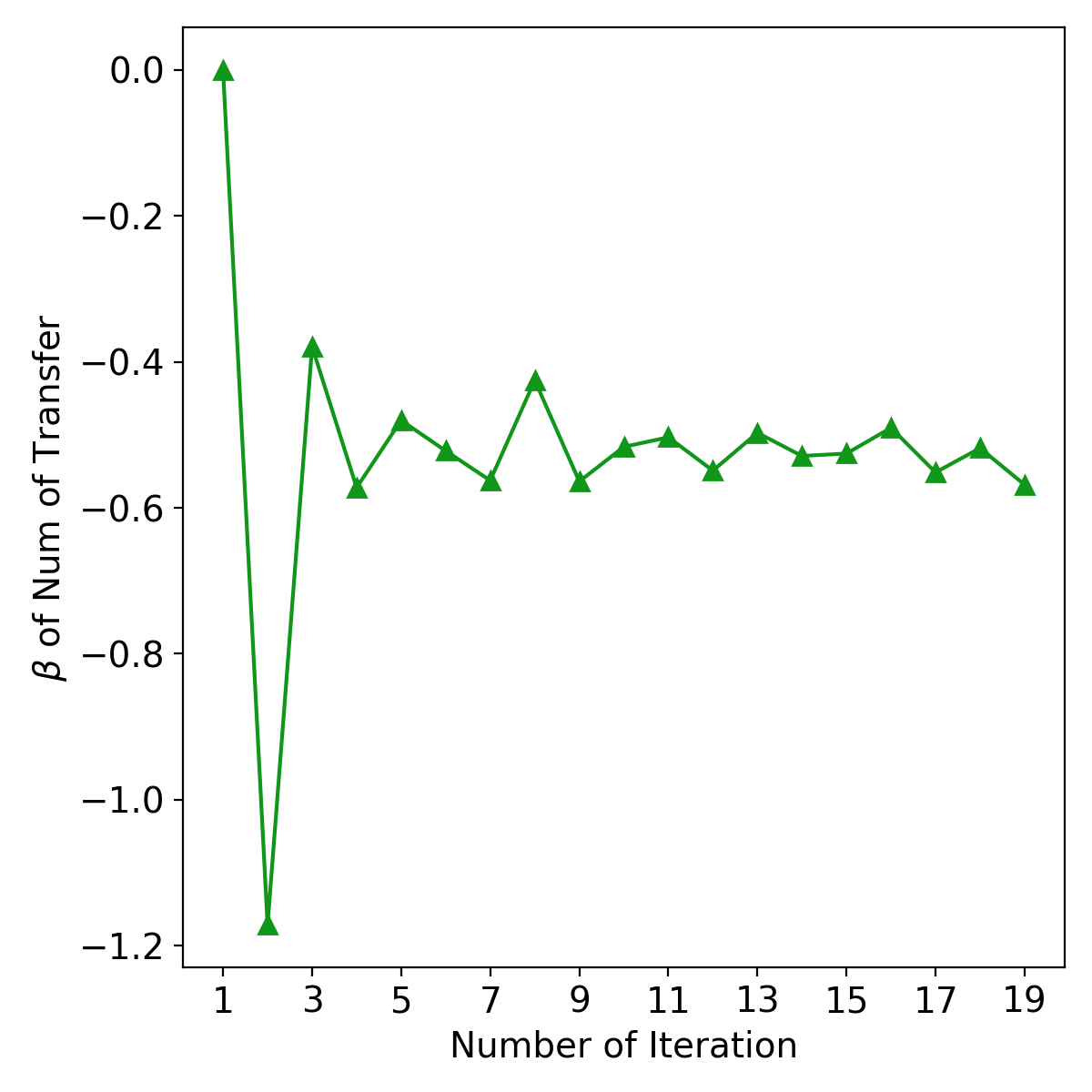}\label{fig_beta_2}}
\hfil
\subfloat[Relative walking time]{\includegraphics[width=0.25\textwidth]{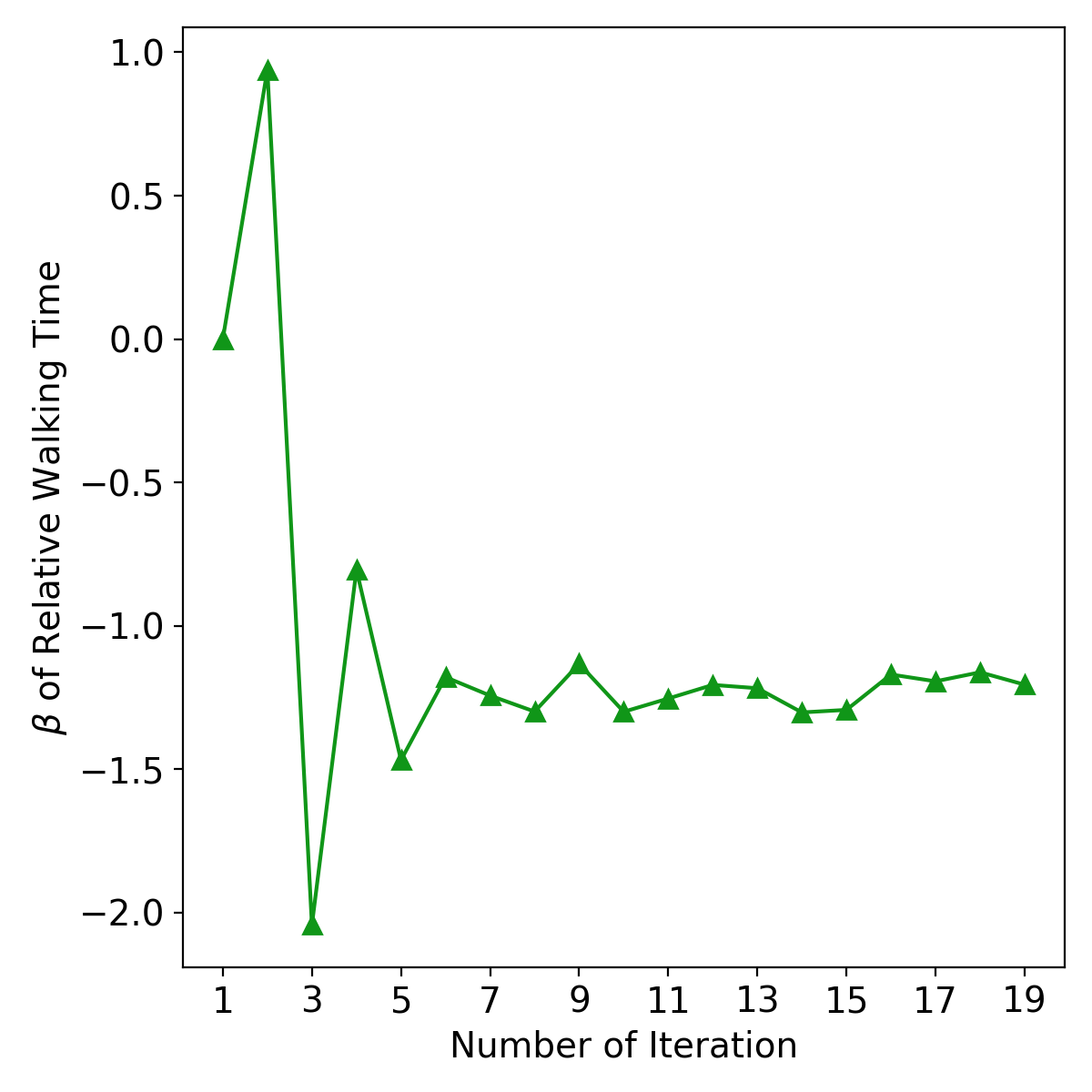}\label{fig_beta_3}}
\hfil
\subfloat[Commonality factor]{\includegraphics[width=0.25\textwidth]{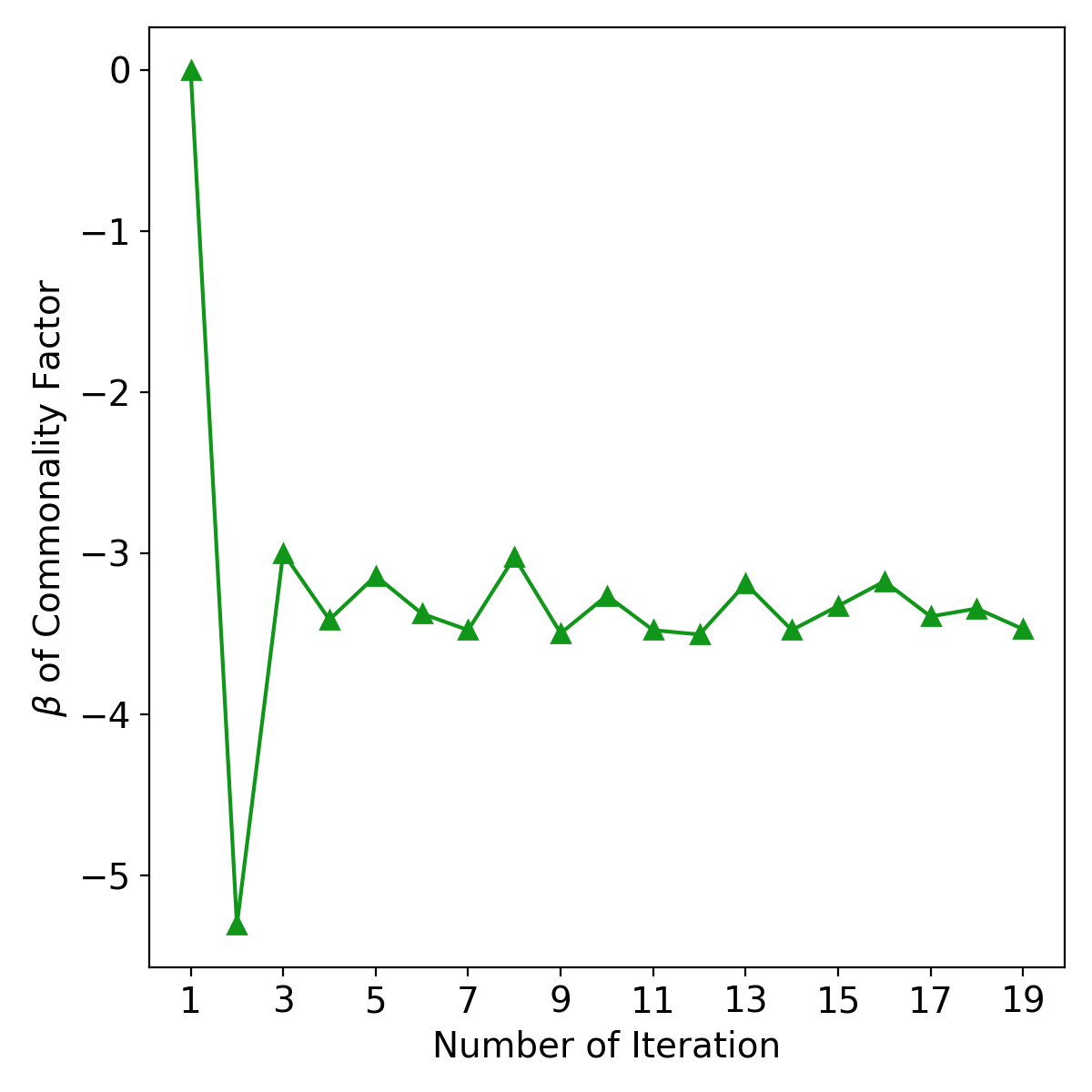}\label{fig_beta_4}}
\caption{Convergence Testing of Estimated $\beta$}
\label{fig_beta_convergence}
\end{figure}


Since we validate the convergence of $\beta$, the corollary of Banach fixed-point theorem can be used to develop the solution procedures. The detailed steps are summarized in Algorithm \ref{alg_overall}. To address the randomness in the network loading model, we define a "burn-in" iteration $K_b$ for $\beta$ selection, and a maximum iteration $K_t$ for algorithm termination (analogue to the Markov chain Monte Carlo (MCMC) methods). $\beta^{\text{ini}}$ is the initial value of $\beta$. 

\begin{algorithm} 
\caption{Solution Procedures for Path Choice Estimation} \label{alg_overall}
\begin{algorithmic}[1]
\State Initialize $\beta^{(0)}=\beta^{\text{ini}}$. 
\State Initialize $\boldsymbol{\mu}^{(0)} = \text{Network Loading }(\beta^{(0)}, \boldsymbol{q_e}, \theta)$ (sub-problem 2)  
\State Set iteration counter $k=0$.
\Do
    \State $k = k + 1$
    \State Solve sub-problem 1(a) with fixed $\boldsymbol{\mu}^{(k-1)}$ and return $\boldsymbol{p}^{(k)}$
    \State Solve sub-problem 1(b) with fixed $\boldsymbol{p}^{(k)}$ and return $\beta^{(k)}$
    \State Solve sub-problem 2 with $\beta^{(k)}$ as input and return $\boldsymbol{\mu}^{(k)}$
\doWhile{$k \leq K_t$}
\State $\beta = \sum_{k=K_b} ^ {K_t} \beta^{(k)}/(K_t-K_b+1)$
\State \Return $\beta$
\end{algorithmic}
\end{algorithm}

\section{Case study and Model validation}\label{case_study}
For the purpose of model illustration and validation, we apply the proposed modeling framework on Hong Kong MTR network. The model is validated using both synthetic data and real-world AFC data. 

\subsection{Hong Kong MTR Network}
The map for Hong Kong MTR system is shown in Fig \ref{fig_map}. In this study, the airport express and light rail transit services are not considered since they are separated from the urban railway lines and passengers who enter the urban railway lines from these services need to tap-in again. The system consists of 10 lines and 114 stations, where 16 out of them are transfer stations. In this network, most transfer stations connect only two lines. A special case is Admiralty station in the Hong Kong island, where three lines pass through the same transfer station. The Admiralty station is in the CBD area of Hong Kong. So during peak hour there are many passengers boarding in these stations and head to the north of the city. 
\begin{figure}[H]
\centering
\includegraphics[width=6 in]{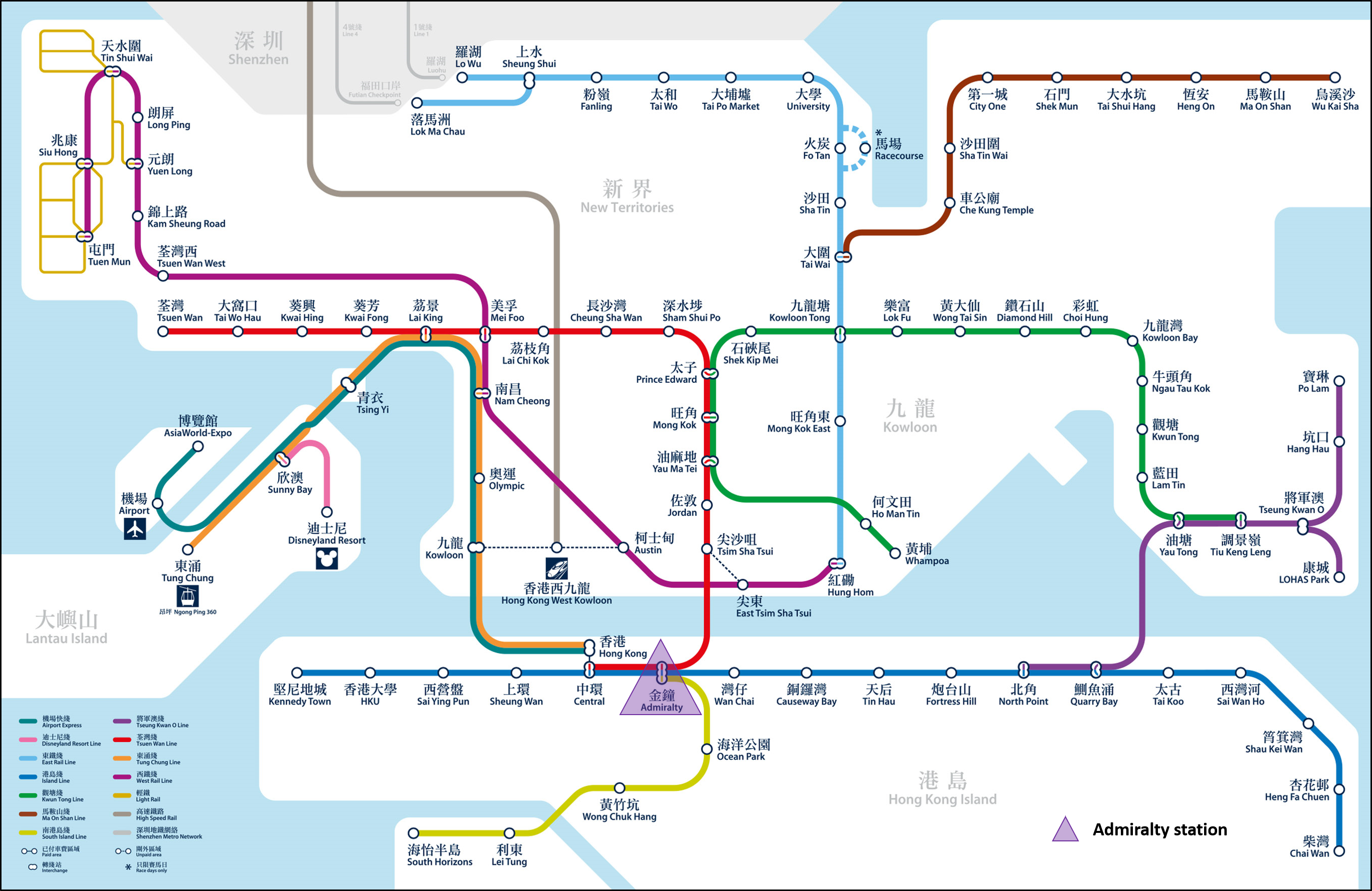}
\caption{Hong Kong MTR Metro System Map}
\label{fig_map}
\end{figure}

\subsection{Validation setting}
We use the AFC data on March 16th (Thursday), 2017 for the model validation. The path sets for each OD pair are provided by MTR. \cite{Li2014MTRChoice} has conducted a revealed-preference (RP) route choice survey of more than 20,000 passengers in the MTR system. The estimation results are shown in Appendix \ref{append_weixuan_results}. According to \cite{Li2014MTRChoice}, the following attributes were used to quantify path utility: (a) total in-vehicle time, (b) number of transfer times, (c) relative walking time (total walking time divided by total route distance) and (d) the commonality factor (Eq. \ref{eq_CF}). The evening peak (18:00-19:00) is selected for validation. For simplicity, we assume the path shares are static during this hour. We set the weights in the objective function of sub-problem 1(a) as $w_1 = 1$ and $w_2 = 0$, which means no prior knowledge is available. The maximum iteration is $K_t$ is set as 15 and "burn-in" iteration $K_b$ is set as 13. $\beta^{\text{ini}}$ is set as 0 for all parameters. The system parameters $\theta$ (constants) of network loading model is summarized below. Access/Egress walking time is defined as the walking time between fare machine and train boarding platform. Warm up (cool down) time indicates the time before (after) simulation period start (end). It is needed because simulation system usually start from empty (no train and passengers). But real world does not start without train/passengers in-progress. 
\begin{itemize}
\item Access/Egress walking time: Platform-specific, collected from MTR field measurement.
\item Transfer walking time: Platform-specific, collected from MTR field measurement.
\item Time table: Obtained from MTR operation team. Future research can use AVL data to get real-world train arrival and departure information.
\item Capacity: 230 passengers per car according to MTR congestion standard. The number of cars for each train is obtained from MTR operation team.
\item Warm up and Cool down time: 60 minutes warm up and cool down time suggested by \cite{Ma2019Network}.
\end{itemize}

Since the real-world path choice information usually unavailable, it is common to quantitatively validate model with synthetic data. To generate the synthetic data, we first extract the OD entry flow from the real-world AFC records. Choice parameters $\beta$ estimated in \cite{Li2014MTRChoice} are treated as people's "true" behavior parameters (called synthetic $\beta$ hereafter). Then, we use the network loading model with the true OD entry flow and the synthetic $\beta$ as input to simulate the travel of passengers in the system, and record people's tap-in and tap-out time. The records of all people's tap-in and tap-out time are treated as \emph{synthetic AFC data}. For model validation, we can apply the proposed model to the synthetic AFC data and compare the estimated $\beta$ with the synthetic $\beta$. However, in terms of real-world validation, as the ground-truth path shares are unavailable, some qualitatively analysis and indirect comparison are conducted. Details can be found in Section \ref{realworld}.

\subsection{Benchmark Model}
To compare the model performance, we use a purely simulation-based optimization (SBO) method \citep{Mo2019Calibrating} as the benchmark. The formulation is shown below.
\begin{mini!}|s|[2]                   
    {\beta}                               
    {w_1\sum_{i_m,j_n}(q^{i_{m}  ,j_{n}}-\tilde{q}^{i_{m} ,j_{n} })^2 + w_2||\beta - \tilde{\beta}||^2 \label{eq:obj_sbo}}   
    {\label{eq:sbo}}             
    {}                                
    \addConstraint{{q}^{i_{m} ,j_n}}{ = \text{Network Loading }(\beta, q^{i_m,j}, \theta)}{ \quad  \quad \forall i_m, j_n}
    \addConstraint{L_{\beta} \leq \beta \leq U_{\beta}}{}{}    
\end{mini!}
where $L_{\beta}$ and $U_{\beta}$ are the pre-determined lower and upper bound of $\beta$. We set $w_1 = 1$ and $w_2 = 0$ as above. $\theta$ is the parameter of the network loading model, which is set equal as above. Compared with our proposed model, the purely SBO method is closer to brute-force searching. So $L_{\beta}$ and $U_{\beta}$ are usually required to narrow the feasible space and make the algorithm work. The values of $L_{\beta}$ and $U_{\beta}$ are shown in table \ref{tab_beta_compare}. By introducing $L_{\beta}$ and $U_{\beta}$, we actually provide the benchmark model with more information. 

Many solution algorithms have been proposed to solve SBO problems. Three major classes includes direct search method, gradient-based method, and the response surface method \citep{osorio2013simulation,amaran2016simulation}. According to \cite{osorio2013simulation} and \cite{cheng2019surrogate}, the response surface method is recently more popular in transportation domain and presents better performance. Thus, in this study, we adopt two response surface methods to solve the benchmark model: bayesian optimization (BYO) \citep{snoek2012practical} and constrained optimization using response surfaces (CORS) \citep{regis2005constrained}. BYO aims to constructs a probabilistic model for the objective function (response surface) and then exploits this model to determine where to evaluate the objective function for next step. In each iteration the probabilistic model will be updated according to the posterior distribution. CORS also need to construct a response surface model, and update the model based on all previously probed points at each iteration. The principles for next evaluated points selection are: (a) finding new points that have lower objective function value, and (b) improving the fitting of response surface model by sampling feasible regions where little information exists. For more details regarding these two methods people can refer to \cite{snoek2012practical} and \cite{regis2005constrained}. 

SBO methods are usually unstable due to the randomness in searching process. We therefore run each solution algorithm for 10 replications and show the mean and standard deviation of objective function curves. For comparison purpose, the SBO methods are only conducted in the synthetic data set because we can only compare the accuracy of path choice estimation in the synthetic data.

\subsection{Results of Synthetic Data}
Two indicators are reported during the iteration. One is the objective function, another is the root-mean-square-error (RMSE). The formula of RMSE is shown below.
\begin{flalign}
\label{eq_RMSE}
\text{RMSE} = \sqrt{\sum_{i_m,j}\sum_{r \in \mathscr{R}(i,j)}(p^{i_m ,j}_r - \hat{p}^{i_m ,j}_r)^2/\sum_{i,j}R_{i,j}}
\end{flalign}
where $p^{i_m ,j}_r$ are the estimated path shares and $\hat{p}^{i_m ,j}_r$ are the synthetic path shares (unit is \%). $\sum_{i,j}R_{i,j}$ is the total number of paths in the system. 

The curve of objective function is shown in Figure \ref{fig_obj}. The error bars for benchmark methods represent the standard deviation. We found the our proposed metthod can dominate the benchmark models both in convergence speed and in final convergent results. The RMSE comparison results are shown in Figure \ref{fig_rmse}. We can also observe our method can approach the "true" path shares rapidly, and obtain lower estimation error than the benchmark models. Note that the RMSE may not always decrease with the reduction of objective function. This is because the relationship between path choices and OD entry-exit flows is highly non-linear. 

\begin{figure}[H]
\centering
\includegraphics[width=5 in]{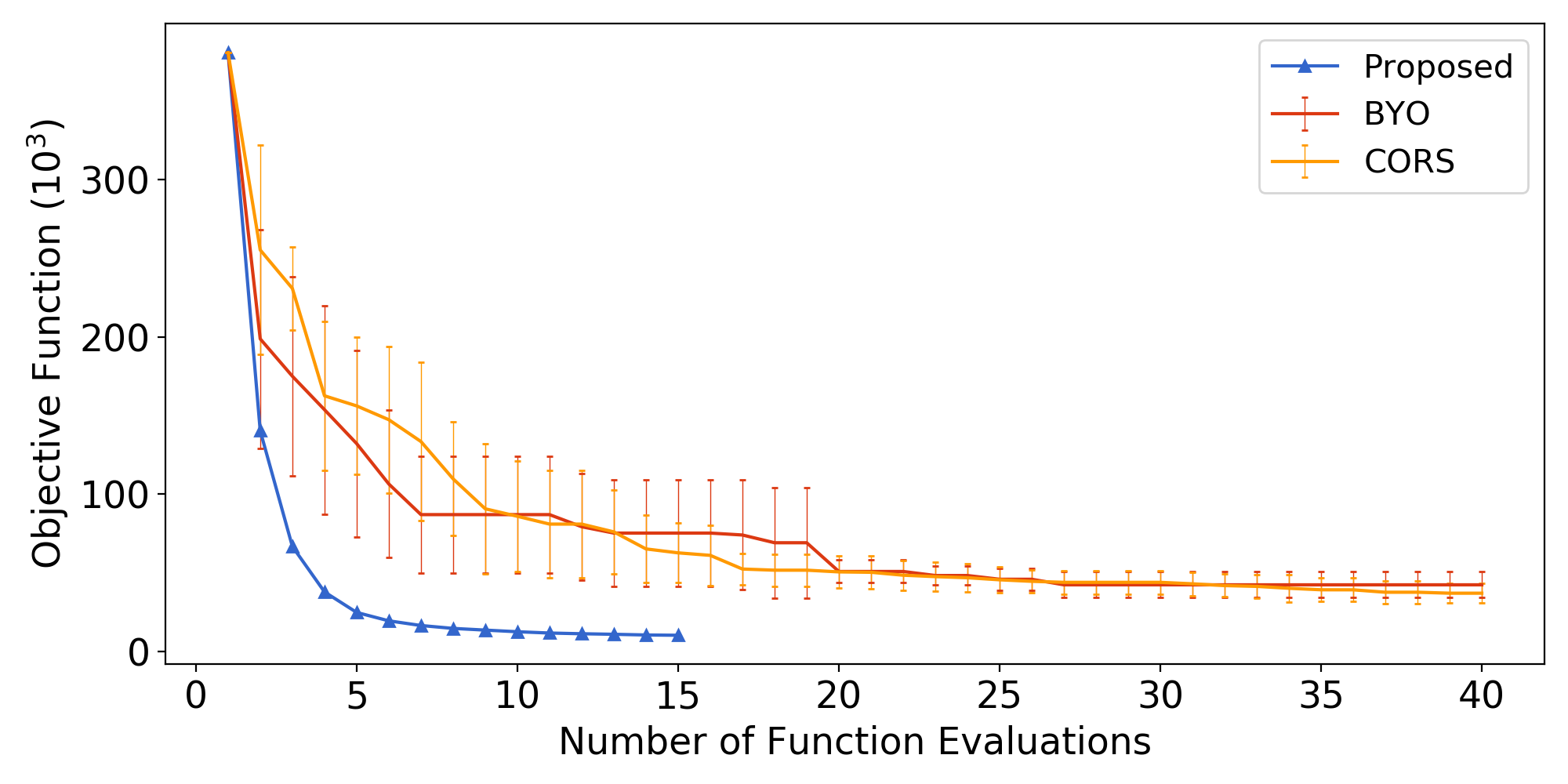}
\caption{Objective Function Results of Synthetic Data}
\label{fig_obj}
\end{figure}

\begin{figure}[H]
\centering
\includegraphics[width=5 in]{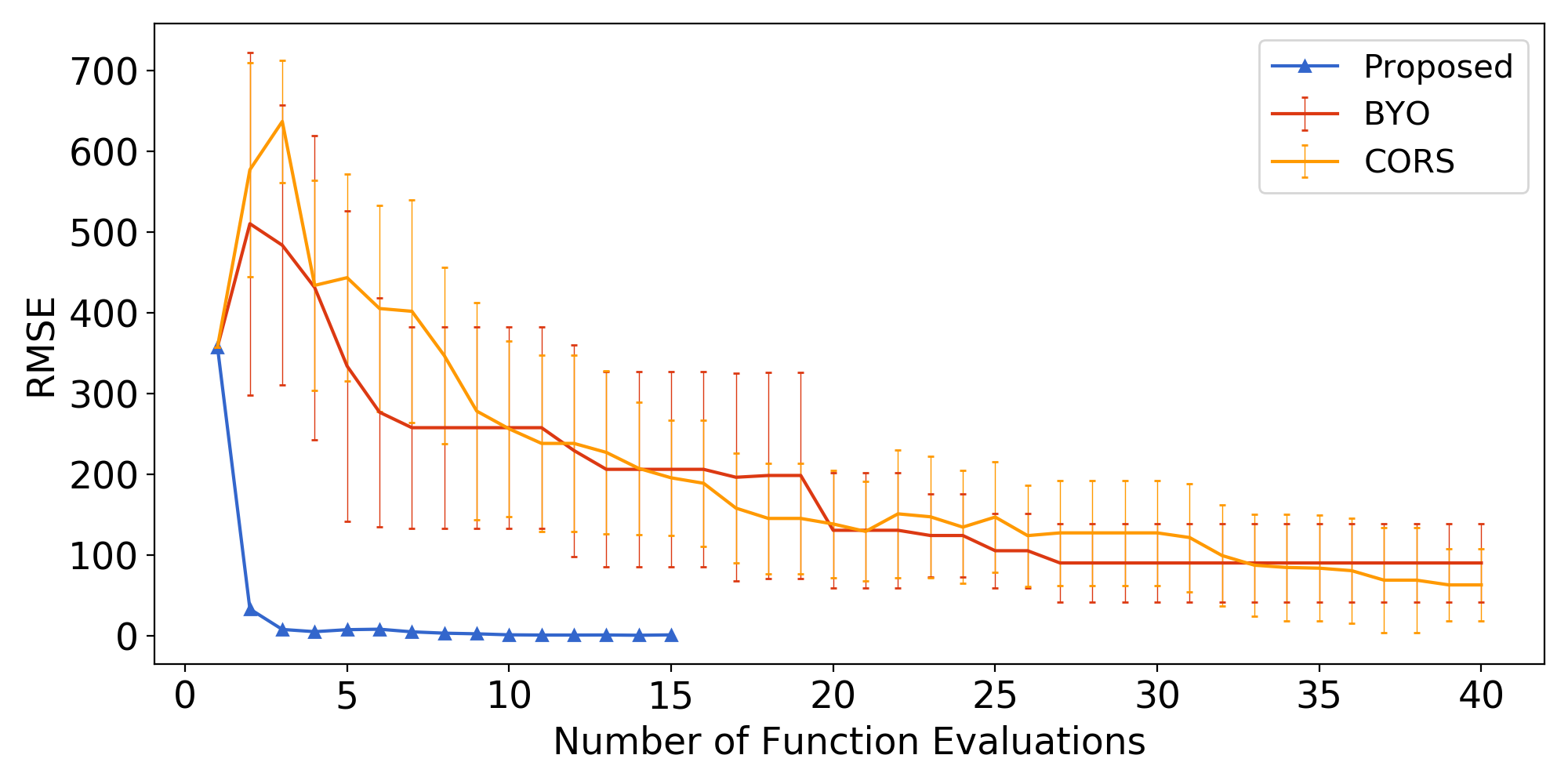}
\caption{RMSE Results of Synthetic Data}
\label{fig_rmse}
\end{figure}

The comparison of estimated $\beta$ and synthetic $\beta$ are shown in Table \ref{tab_beta_compare}. The estimated $\beta$ of the proposed method are very close to the synthetic ones, and also outperforms the $\beta$ estimated from the benchmark models.

\begin{table}[H]
\caption{$\beta$ Estimation Results of Synthetic Data}
\centering
\begin{tabular}{@{}llllll@{}}
\toprule
\multirow{2}{*}{Variable} & \multirow{2}{*}{Synthetic} & \multicolumn{3}{c}{Estimated} & \multicolumn{1}{c}{\multirow{2}{*}{{[}$L_{\beta}$,$U_{\beta}${]}}} \\ \cmidrule(lr){3-5}
                          &                            & Proposed  & BYO     & CORS    & \multicolumn{1}{c}{}                                               \\ \midrule
In-vehicle time ($\beta_1$)          & -0.147                     & -0.156    & -0.205  & -0.231  & {[}-2, 0{]}                                                        \\
Number of transfers ($\beta_2$)      & -0.573                     & -0.544    & -1.218  & -1.189  & {[}-4, 0{]}                                                        \\
Relative walking time  ($\beta_3$)    & -1.271                     & -1.291    & -2.499  & -2.316  & {[}-6, 0{]}                                                        \\
Commonality factor ($\beta_4$)       & -3.679                     & -3.413    & -6.184  & -6.537  & {[}-10, 0{]}                                                       \\ \midrule
Objective function        & -                          & 10328.8   & 42390.6 & 37066.1 & -                                                                  \\
RMSE                      & -                          & 1.36     & 90.24   & 63.09   & -                                                                  \\ \bottomrule
\end{tabular}
\label{tab_beta_compare}
\end{table}

\subsection{Results of Real-world Data}\label{realworld}
In terms of real-world data, we cannot conduct the directly comparison since the ground truth path choice is unavailable. We first conduct a qualitative analysis in terms of estimated parameters (see Table \ref{tab_real_beta}). Compared with the $\beta$ obtained by \cite{Li2014MTRChoice} (i.e. the synthetic $\beta$ in Table \ref{tab_beta_compare}), the scale of all coefficients are similar. The trade off between in-vehicle time and number of transfers are reasonable, where one transfer is equivalent to 7.9 minutes of in-vehicle travel time. The trade-off between in-vehicle time and walking time is relatively small for long trips but significant for short trips. The results indicate that for a trip with 4 
stations (around 5 cm of map distance), one minute of transfer walking time is equivalent to 2.51 minutes of in-vehicle travel time. For a trip with 8 stations, (around 12 cm in map distance), one minute of walking time is equivalent to 1.05 minutes of in-vehicle travel time. The substitution patterns are reasonable and similar to the previous results \citep{Li2014MTRChoice}.

\begin{table}[H]
\caption{$\beta$ Estimation Results of Real-world Data}
\centering
\begin{tabular}{@{}lllll@{}}
\toprule
In-vehicle time & Number of transfers & Relative walking time & Commonality factor \\ \midrule
-0.116        & -0.920              & -1.457              & -1.775            \\ \bottomrule
\end{tabular}
\label{tab_real_beta}
\end{table}

Though we cannot directly compare path shares, some other indicators (e.g. left behind rate) can also reflect the quality of path shares. We have the field observation data for Admiralty station Northbound platform during the testing period (18:00-19:00). The data is collected by MTR employees who counted the passengers in the platform. The average left behind rate, total number of arrival passengers (sum of tap-in and transfer passengers) and total number of boarding passengers during 18:00-19:00 are recorded. These indicators can also be obtained from network loading model which takes path shares as input. So we can input the estimated path shares into the network loading model and compare the output indicators with the ground truth. Two other path shares are used to compare with the estimated path share. One is a naive path share which assume all paths are equal likely to be chosen (named as "uniform" in Table \ref{tab_real_compare}). Another is the path share calculated from \cite{Li2014MTRChoice}, which is currently used by MTR agency for operation purpose. 

The comparison results are shown in Table \ref{tab_real_compare}. Compared with the ground truth, the estimated path shares can generate very close number of arrival passengers and number of boarding passengers. The square error of OD entry-exit flow (i.e. the objective function) is also the lowest. Although the output left behind rate (LBR) is not as good as two baseline path shares, we observe this indicator cannot completely reflect the quality of path shares because the uniform path choice tends to generate the best LBR. But from the other three indicators we know uniform is the worst path choice among the three. Since the objective function is a network-level indicator but the other three are local platform-level, the estimated path choice is the most balanced one and gives more network-level goodness of fit, which is more suitable for the system evaluation.

\begin{table}[H]
\caption{Indicators comparison of Admiralty station (18:00 to 19:00)}
\centering
\resizebox{1\textwidth}{!}{
\begin{tabular}{@{}lcccc@{}}
\toprule
\multirow{2}{*}{} & \multicolumn{4}{c}{Indicators}                                                                                                                                                                                                                                              \\ \cmidrule(l){2-5} 
                  & Left behind rate & \begin{tabular}[c]{@{}c@{}}Number of \\ arrival passengers\end{tabular} & \begin{tabular}[c]{@{}c@{}}Number of \\ boarding passengers\end{tabular} & \begin{tabular}[c]{@{}c@{}}Objective function \\ (Square error of OD entry-exit flow)\end{tabular} \\ \midrule
Ground-truth      & 0.767            & 24,945                                                                   & 24,696                                                                    & -                                                                                              \\
Proposed model    & 0.705            & 24,890                                                                  & 24,403                                                                   & 1,044,692                                                                                              \\
Li (2014)         & 0.734            & 24,959                                                                   & 23,125                                                                    & 1,170,160                                                                                              \\
Uniform           & 0.779            & 25,683                                                                   & 19,599                                                                    & 1,289,672                                                                                             \\ \bottomrule
\end{tabular}
}
\label{tab_real_compare}
\end{table}

\subsection{Robustness Testing}
Model robustness is an important indicator for the real-world application. In this section we test two perspectives: different initial $\beta$ and different case study dates. A robust model should output similar estimated $\beta$ regardless of initial values. In terms of different case study dates, the estimated $\beta$ for all weekdays in the same week should be similar since passengers' choice behaviors are stable during a short term.

\subsubsection{Different Initial $\beta$}
The tests of different initial $\beta$ are conducted in the synthetic data set because we can compare the distribution of estimated $\beta$ and synthetic $\beta$ in this way. The initial $\beta$ are drawn from the uniform distribution $\text{U}(L_\beta,U_\beta)$ for 12 replications. Figure \ref{fig_robust_obj} shows the convergence of objective function with respect to different initial $\beta$. The initial objective function varies a lot given different initial $\beta$. But after around 10 iterations. All objective function curves converge to the same value, which demonstrates the robustness of model with respect to initial values. Figure \ref{fig_robust_boxplot} is the boxlpot of estimated $\beta$ of different replications. The name of $\beta_1$,...,$\beta_4$ can be found in Table \ref{tab_beta_compare}. The estimated value of $\beta_1$, $\beta_2$ and $\beta_3$ are very stable regardless of initial values. While the estimated $\beta_4$ (i.e. commonality factor) shows some fluctuations, but still within a small range (95\% confidence interval is around $-3.2\sim-3.6$). This is corresponding to the survey estimation results where $\beta_4$ has a relatively low t-value (see \ref{append_weixuan_results}). It is worth noting that similar to the results in Table \ref{tab_beta_compare}, we cannot perfectly estimate the $\beta_4$, which leads to the synthetic $\beta_4$ located outside the 95\% confidence interval. The reason may be the collinearity between commonality factor and other variables. But compared with the baseline models, our method already output a good estimate.

\begin{figure}[H]
\centering
\subfloat[Convergence of Objective Function (Different curves indicate different initial $\beta$)]{\includegraphics[width=0.4\textwidth]{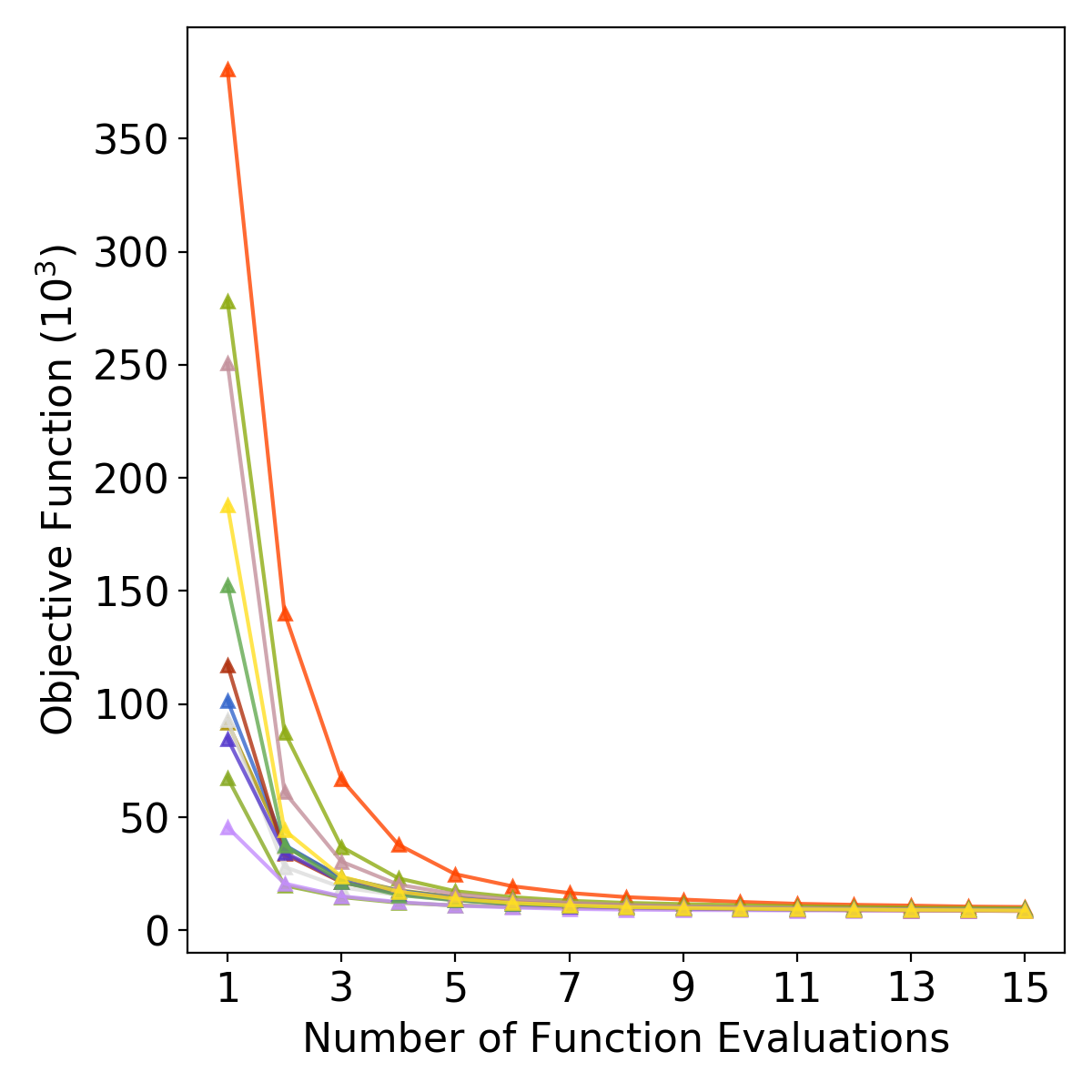}\label{fig_robust_obj}}
\hfil
\subfloat[Boxplot of Estimated Coefficients with Different Initial $\beta$]{\includegraphics[width=0.4\textwidth]{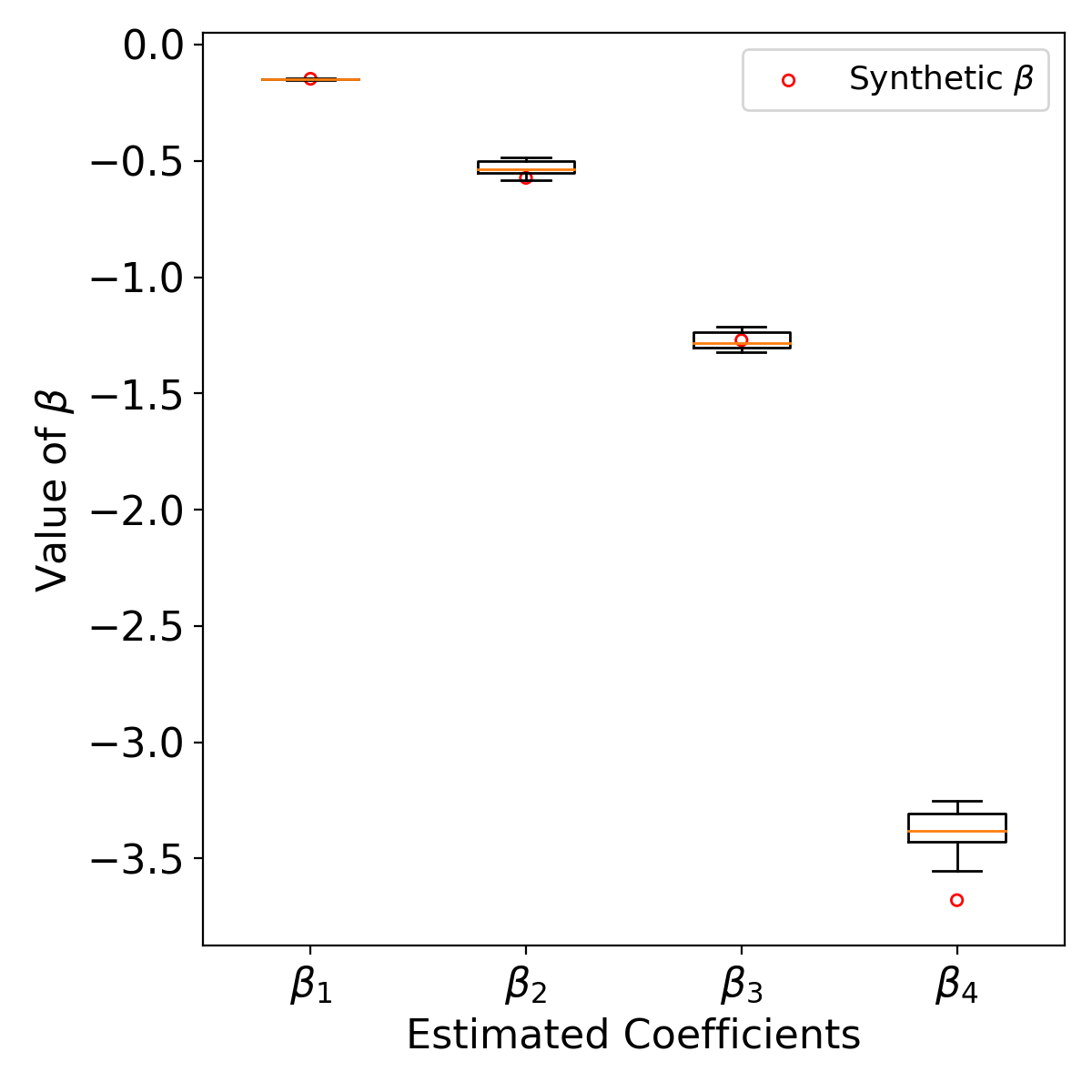}\label{fig_robust_boxplot}}
\hfil
\label{fig_beta_compare}
\caption{Testing Results of Different Initial $\beta$ using Synthetic Data}
\end{figure}

\subsubsection{Different Case Study Dates}
To test the robustness in terms of different case study dates, we applied our model to the real-world data in the week from March 13rd to March 17th, 2017. The estimated $\beta$ comparison is shown in Figure \ref{fig_robustness_dates}. In general, all estimated values are stable regardless of case study dates except for the coefficients of relative walking time on Friday. This may be due to that Friday night is the start of weekends, in which people have more entertainment trips in evening peak hour. And the walking time is less sensitive for entertainment trips comparing with the commuting trips. Overall, the proposed model is robust in terms of different case study weekdays in a short-term, which implies the underlying path choice behaviors of passengers is captured by the model. 

\begin{figure}[H]
\centering
\includegraphics[width=5 in]{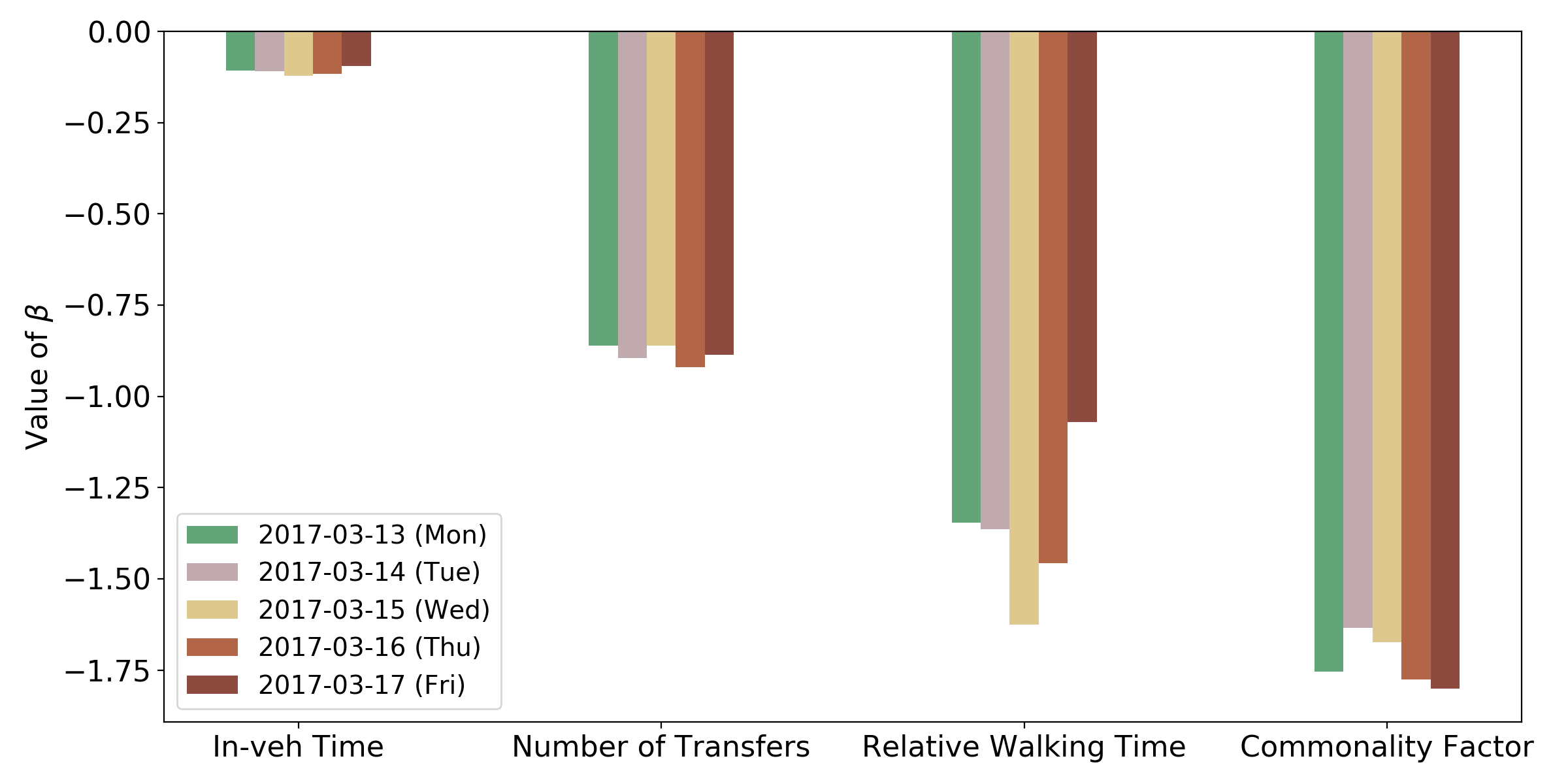}
\caption{Estimated $\beta$ Comparison of Different Dates using Real-world Data}
\label{fig_robustness_dates}
\end{figure}
\section{Conclusion and Discussion}\label{discuss}
In this paper, we developed an assignment-based approach to infer the passenger route choice behavior. It models the correlation of crowding among platforms and interactions between path choice and crowding explicitly. The path choice estimation is modeled as a optimization problem. The original intractable problem is decomposed into three tractable sub-problems which can be solved efficiently. Case studies using synthetic and actual data in Hong Kong MTR system validate the effectiveness and robustness of the approach.

The advantage of this framework lies in embedding the network assignment into path choice estimation, which can incorporate the interaction between left behind and path choices. The model can be generalized to accommodate different choice model structures. For example, people can leverage path-size logit and cross nested logit model to describe to choice behaviors. The revision of the framework only requires the change of sub-problem 1(b). Moreover, more path attributes, such as waiting time, expected left-behind rate, can also be added into the consideration.

The developments in this paper have been focused on a general framework, while the models and examples we presented in this paper still have some limitations. First, we imposed the assumption on route behavior modeling that only one set of $\beta$ is applied for the whole network. The real-world route choice behaviour may be more diverse and heterogeneous. Future research can cluster different OD pairs with different $\beta$ based on the corresponding passengers' characteristics. The model can be easily generalized with multiple sets of $\beta$. Second, we assume a fixed physical capacity for the train in network loading. In the reality, the number of people can board a train may vary from stations and crowding levels \citep{Ma2019Network, liu2016willingness}, which may also affect the path choice estimation. Future studies can incorporate the concepts of willingness to board (WtB) into the model and co-estimate the path choice, left behind and WtB.

\section{Author Statement}
The authors confirm contribution to the paper as follows: study conception and design: B. Mo, Z. Ma, H.N. Koutsopoulos, J. Zhao; data collection: B. Mo, Z. Ma; analysis and interpretation of results: B. Mo, Z. Ma; draft manuscript preparation: B. Mo. All authors reviewed the results and approved the final version of the manuscript.

\section{Acknowledgements}
The authors would like to thank Hong Kong Mass Transit Railway (MTR) for their support and data availability for this research. 

\section*{References}

\bibliography{mybibfile}

\appendix
\appendixpage
\section{Passenger Route Choice Model for MTR System} \label{append_weixuan_results}
These results are from \cite{Li2014MTRChoice}. The C-logit Model formulation is same to Eq. (\ref{eq_MNL}) and Eq. (\ref{eq_CF}). A total number of 31,640 passengers completed the questionnaire. After filtering duplicate responses, 26,996 responses were available. The model results are shown in Table \ref{tab_choice}. The main explanatory variables are the total in-vehicle time, relative transfer walking time and number of transfers. All variables are statistically significant with the expected signs. Routes with high in-vehicle time, walking time and number of transfers are less likely to be chosen by passengers.

\begin{table}[H]
\caption{Route Choice Model Estimation Results}
\centering
\begin{tabular}{@{}lllll@{}}
\toprule
                               & Estimate & Std. Error & t-value &     \\ \midrule
In-vehicle time                & -0.147   & 0.011      & -13.64  & *** \\
Relative walking time & -1.271   & 0.278      & -4.56  & *** \\
Number of transfers            & -0.573   & 0.084      & -6.18   & *** \\
Commonality factor             & -3.679   & 1.273      & -2.89   & **  \\ \midrule
\multicolumn{5}{l}{$\rho^2= 0.54$ }                                  \\ \bottomrule
\multicolumn{5}{l}{***: $p < 0.01$; **: $p<0.05$.} \\
\end{tabular}
\label{tab_choice}
\end{table}
\end{document}